\newtheorem{lemma}{Lemma}
\newtheorem{theorem}{Theorem}
\begin{document}

\title{On the Exact Lower Bounds of Encoding Circuit Sizes of Hamming Codes and Hadamard Codes
}


\author{Zhengrui Li,   Sian-Jheng Lin,~\IEEEmembership{Member,~IEEE,}   Yunghsiang S. Han,~\IEEEmembership{Fellow,~IEEE,}
\thanks{This work was partially funded by the CAS Pioneer Hundred Talents Program under contract no. KJ2100060007 and by the NSFC of China (No.~61671007).
Z. Li and S.-J. Lin are with
the CAS Key Laboratory of Electro-magnetic Space Information, the School of Information Science and Technology, University of Science and Technology of China (USTC), Hefei,Anhui, China, email: dd622089@mail.ustc.edu.cn, sjlin@ustc.edu.cn}            
\thanks{Y. S. Han is with the
School of Electrical Engineering \& Intelligentization,
Dongguan University of Technology, email: yunghsiangh@gmail.com
}
}




\maketitle

\begin{abstract}
In this paper, we investigate the encoding circuit size of Hamming codes and Hadamard codes. To begin with, we prove the exact lower bound of circuit size required in the encoding of (punctured)~Hadamard codes and (extended)~Hamming codes. Then the encoding algorithms for (punctured)~Hadamard codes are presented to achieve the derived lower bounds. For (extended)~Hamming codes, we  also propose encoding algorithms that achieve the lower bounds.
\end{abstract}

\section{Introduction}
In the late 1940s, Claude Shannon founded a great theory of reliable communication in the presence of noise, which leads to the study of error-correcting codes~\cite{Hill,MS,2004-lin}, that are used to improve the error correction capability under certain coding rates. Later, the complexity of error-correcting codes raises the interest of theoretical researchers. The complexity includes complexity of operation circuits and control circuits, (intermediate) memory/storage requirement and encoding latency, and so on. In this paper, we consider the circuit sizes of the encoder of linear block codes. That is, we focus on the number of Boolean operations used in the encoding algorithms of certain linear block codes.

For an $[n,k,d]$ linear block code, the codeword is defined as $\mathbf{x}\mathbf{G}$, where $\mathbf{G}$ is the generator matrix, and $\mathbf{x}$ is the vector of $k$ message symbols. Thus, a trivial encoding algorithm requires $\mathcal{O}(kn)$ arithmetic operations. In addition, the encoding algorithm requires at least $\Omega(n)$ operations. It is shown in \cite{Barg}, given an $[n,k,d]$ binary linear code $\mathcal{C}$, we can construct another $[n,k,d]$ binary linear code $\mathcal{C}'$, whose encoding algorithm requires at most $kd$ exclusive-ORs~(XORs). Further, as there exists a class of binary linear codes reaching the Gilbert-Varshamov bound \cite{MS}, we conclude that there exists a class of binary linear codes reaching the Gilbert-Varshamov bound, and its encoding algorithm requires at most $d$ XORs per message bit. However, given any $[n,k,d]$ binary linear code, it is still unknown whether this code can be encoded using at most $d$ XORs per message bit.

One of the main issues of complexity theory is the complexity lower bound. However, it is difficult to explore the lower bound of the encoding complexities of most linear block codes. To date, a number of linear block codes with linear-time encoder $\mathcal{O}(n)$ are proposed, such as fountain codes~\cite{fountain,fountain1} and expander codes~\cite{556667,556668}. To our knowledge, the expander code is the only known asymptotically good linear block code to have linear-time encoders and decoders. However, for these linear-time codes, the constant factors hidden in the big-O complexity are unknown. Lin~\cite{SJ} shows that the MDS codes with two or three parities require at least $2+\epsilon$ XORs per message bit in encoding, where $\epsilon \rightarrow 0$. However, the factor $2+\epsilon$ is not tight. In this paper, we first give a tight lower bound of the encoding circuit sizes for two classes of  linear block codes that are the Hamming codes and the Hadamard codes.

In 1950, Richard Hamming~\cite{6772729} invented a class of error-correcting codes, termed as $[7,4]$ Hamming codes. Historically, Hamming codes are the first nontrivial family of error-correcting codes~\cite{720549}. This leads to a series of linear block codes, such as Reed-Muller codes and Reed-Solomon codes. The minimum Hamming  distance of Hamming codes is three, and this provides single-error correction or two-error detection. It is known that Hamming codes are perfects codes, which means that the Hamming codes achieve the optimal coding rate of the block codes with the minimum Hamming distance three~\cite{6772729}. Hence, Hamming codes have been implemented in systems which require faster and more accurate information transfers~\cite{6154204}. The extended Hamming codes are constructed by appending one extra parity-check bit to the Hamming codes~\cite{Hill,MS,6772729}. This extension allows the codes to perform sing-error correction and double-error detection simultaneously. Today, Hamming codes are commonly used to correct errors appeared in DRAM and SRAM~\cite{2004-lin,SRAM,RSHAM}. In addition, there are many codes that use the (extended) Hamming codes as the base codes~\cite{RSHAM,LDPCHAM,LDPCHAMM,LDPCHAMMM}.

Hadamard codes~\cite{2004-lin} are the dual codes of the Hamming codes. The Hadamard code is an error-correcting code that is used for error detections and corrections when transmitting messages are over extremely noisy or unreliable channels. 
An application of Hadamard codes is in code-division multiple-access (CDMA) systems~\cite{289411,295356,54460}. Hadamard codes are also used in communication systems for synchronizations and bandwidth spreading. In addition, the codes are also used in helicopter satellite communications \cite{had,hada}.
 
In this paper, we investigate the encoding computational complexities of Hamming codes and Hadamard codes. Precisely, we derive a exact lower bound on the circuit size of the encoding of (extended) Hamming codes and (punctured) Hadamard codes. Then encoding algorithms of these codes are proposed to achieve this bound. To our knowledge, we are the first to show a exact and tight lower bound on encoding computational complexity of non-trivial linear block codes.

In the rest of this paper, the background is given in Section~\ref{sec:2}. Section~\ref{sec:5} proves the exact lower bounds on encoding circuit size of these codes. Sections~\ref{sec:4.0} includes the algorithms and the circuit size analysis for Hadamard codes and punctured Hadamard codes, and the algorithms and analysis for the Hamming codes and the extended Hamming codes are given in Sections~\ref{sec:3}. It is shown in Sections~\ref{sec:4.0} and \ref{sec:3} that the lower bounds derived in Section~\ref{sec:5} are achievable. Section~\ref{sec:6} concludes this work.

\section{Preliminaries}\label{sec:2}
\begin{table}
\begin{center}
\caption{\label{tab:1}Table of notation}
\begin{tabular}{l|c}
\hline\hline
Term & Meaning\\
\hline
$[n]$ & $\{1,\dots,n\}$\\
\hline
$[m,n]$ & $\{m, m+1,\dots,n\}$\\
\hline
$\mathbf{G}_k$ & Generator matrix of $[2^k, k]$ Hadamard codes\\
\hline
$\mathbf{G}_k'$ & Generator matrix of $[2^k, k+1]$ punctured Hadamard codes\\
\hline
$\mathbf{G}_k''$ & Systematic generator matrix of $[2^k, k+1]$ punctured Hadamard codes\\
\hline
$\mathbf{H}_k$ & Parity-check matrix of $[2^k-1, 2^k-k-1]$ Hamming codes\\
\hline
$\mathbf{H}_k'$ & Parity-check matrix of $[2^k, 2^k-k-1]$ extended Hamming codes\\
\hline
$\mathbf{H}_k''$ & Systematic parity-check matrix of $[2^k, 2^k-k-1]$ extended Hamming codes\\
\hline 
$\mathbf{I}_{k}$ & Identity matrix of size $k\times k$\\
\hline 
$\mathbf{0}_n$ & All-Zeros row vector of length $n$\\
\hline
$\mathbf{1}_n$ & All-ones row vector of length $n$\\
\hline
$T^{i}(x)$& The  binary representation of  $x \in [0,2^{i}-1]$ with length $i$\\
\hline\hline
\end{tabular}
\end{center}
\end{table}

This section introduces the notations used in this paper, as well as the definitions of Hamming codes and Hadamard codes. For an $[n, k]$ linear code, $n$ denotes the code length and $k$ denotes the message (information) length. Table~\ref{tab:1} tabulates the common notations used in the paper. The function $T^{i}$ in Table~\ref{tab:1} is defined as follows. 
\begin{align*}
T^{i}:[0,2^{i}-1] &\to \mathbb {F}_2^i
\\a&\mapsto  [a_{i-1}\quad a_{i-2}\quad \dots\quad a_0]^T,
\end{align*}
where 
\[
a=a_{i-1}\cdot 2^{i-1}+a_{i-2}\cdot 2^{i-2}+\dots +a_0 
\]
and $A^T$ is the transpose of matrix $A$.
That is, $T^{i}(x)$ returns the binary representation of  $x$ with length $i$. It is clear that
\begin{equation}\label{eq:Tki}
T^{i}(x)+{\underbrace{[1\quad 0\quad \dots\quad 0]}_{i}}^T=T^{i}(x+2^{i-1})\qquad\forall x\in [0, 2^{i-1}-1].
\end{equation}

Furthermore, the average number of XORs per message bit is defined as
\[
\text{Avg.\ XORs}:=\lim_{\text{Size of the message} \rightarrow \infty}\frac{\text{Total number of XORs}}{\text{Size of the message}}.
\]
Throughout the paper, all arithmetic like multiplication, addition are in binary.

\subsection{Logical circuit}\label{sec:2A}
In this paper, we use the terminology logic circuit \cite{Barg}, or circuit for short, to represent the computational complexity. This is a directed tree whose nodes (gates) corresponding to Boolean operations from a chosen basis. An input string (bit) is fed to the input nodes of the circuit and the computational result appears on the output nodes. Complexity is measured by the number of gates (size) and the length of the longest path from an input to an output (parallel depth). For example, a circuit, to calculate $y_{0}=x_{0}+x_{1}+x_{2}$, is presented in Figure \ref{fig:fig4} with the input $\{x_{i}\}_{i=0}^{2}$ and the output $y_{0}$. There are only two addition gates $z_{0}$ and $y_{0}$ in Figure \ref{fig:fig4}, and therefore the circuit size is $2$. Further, the longest path from an input to an output is $x_{0}\rightarrow z_{0}\rightarrow y_{0}$, and thus the parallel depth is $2$.

\begin{figure}
	\center
	\includegraphics[width=0.5\columnwidth]{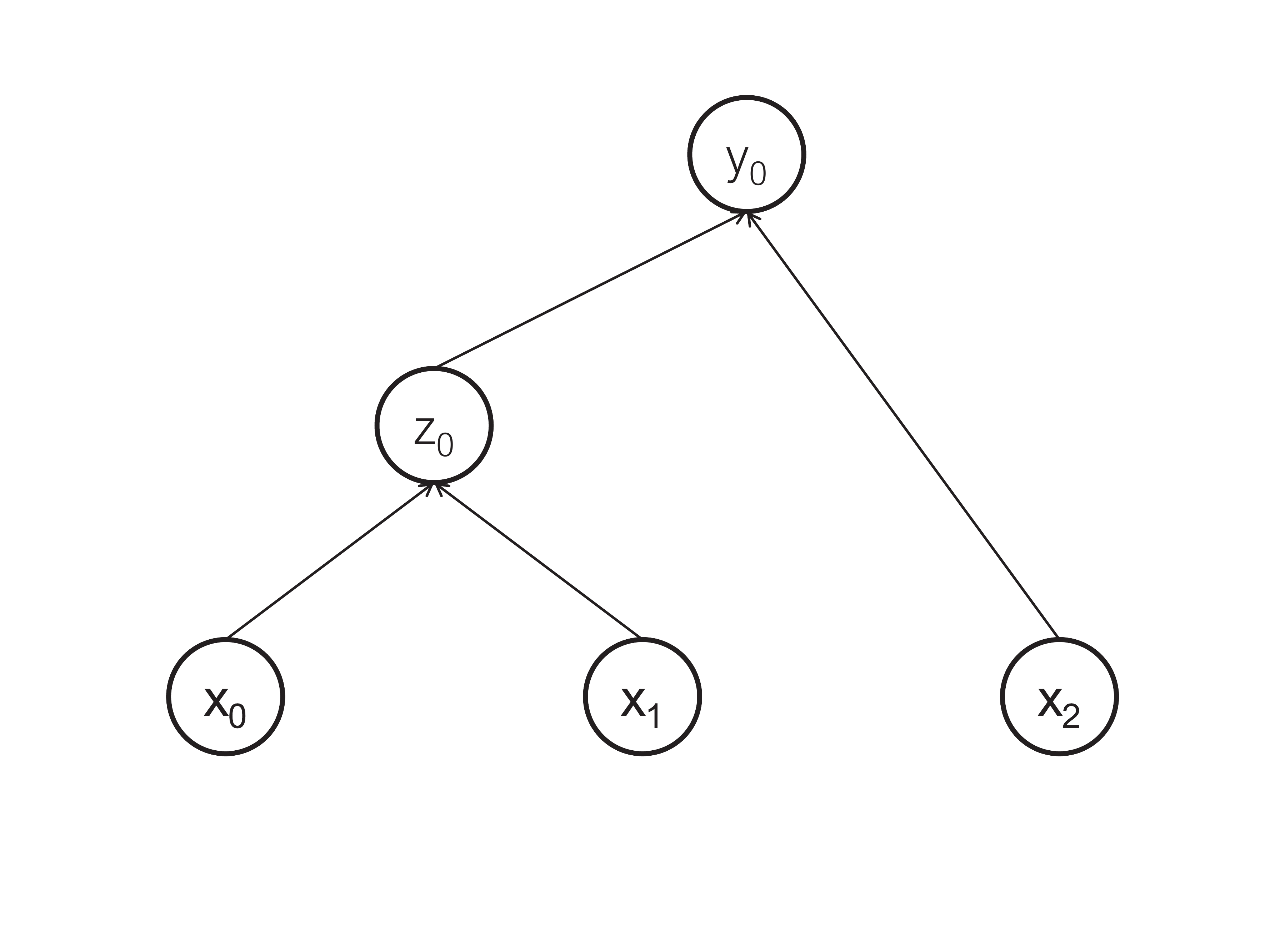}
	\caption{\label{fig:fig4}Circuit for $[4, 3]$ punctured Hadamard codes}
\end{figure}

\subsection{Systematic binary linear codes}
For an $[n=k+t, k]$ systematic binary linear code $C$, a codeword of $C$ consists of $k$ message symbols and $t$ parities over $\mathbb{F}_2$. W.L.O.G, the message symbols are located in the first $k$ symbols of the codeword, and the generator matrix is
\begin{equation}\label{eq:G0}
\mathbf{G}:=\begin{bmatrix}
\mathbf{I}_k & \mathbf{A}^T
\end{bmatrix},
\end{equation}
where $\mathbf{A}$ is a $t\times k$ matrix. Given a $k$-element message vector $\mathbf{x}$, the codeword $\mathbf{y}$ is defined as
\[
\mathbf{y}=\mathbf{x}\mathbf{G}=\begin{bmatrix}\mathbf{x}& \mathbf{x}\mathbf{A}^T\end{bmatrix}=\begin{bmatrix}\mathbf{x}& \mathbf{p}\end{bmatrix}.
\]
for any codeword $\mathbf{y}$, we have
\[
\mathbf{H}\mathbf{y}=\mathbf{0},
\]
where the parity check matrix is
\begin{equation}\label{eq:33}
\mathbf{H}:=\begin{bmatrix}
\mathbf{A} & \mathbf{I}_{t}
\end{bmatrix}.
\end{equation}

The dual code of $C$ is an $[n=k+t, t]$ systematic code with the generator matrix
\begin{equation}\label{eq:Gt}
\mathbf{G}^\perp=\begin{bmatrix} \mathbf{A}& \mathbf{I}_t\end{bmatrix}.
\end{equation}
Given a $t$-element message vector $\mathbf{x}^\perp$, the codeword $\mathbf{y}^\perp$ is defined as 
\[
\mathbf{y}^\perp=\mathbf{x}^\perp \mathbf{G}^\perp=\begin{bmatrix}\mathbf{x}^\perp \mathbf{A}& \mathbf{x}^\perp\end{bmatrix}=\begin{bmatrix}\mathbf{p}^\perp&\mathbf{x}^\perp\end{bmatrix}.
\]
The parity-check matrix for dual codes is the generator matrix for the original codes, and vice versa. 

\subsection{Hadamard codes}\label{sec:hadamard}
The generator matrix of $[2^k, k]$ Hadamard codes~\cite{4233219} with the minimum distance $2^{k-1}$ is given as follows. Let $\mathbf{G}_1:=\begin{bmatrix}0&1\end{bmatrix}$. For $k\geq 1$, a $k\times 2^{k}$ matrix $\mathbf{G}_k$ consists of all $k$-bit binary numbers in ascending  order. Precisely,  
\begin{equation}\label{eq:recur}
\begin{aligned}
\mathbf{G}_k:=
\begin{bmatrix}
\mathbf{0}_{2^{k-1}}& \mathbf{1}_{2^{k-1}}\\
\mathbf{G}_{k-1}&\mathbf{G}_{k-1}\\
\end{bmatrix}
= 
\begin{bmatrix}
T^k(0)& T^k(1)&\dots &T^k(2^k-1)
\end{bmatrix}.
\end{aligned}
\end{equation}
Notably, $\mathbf{G}_k$ has a zero column, which will generate a null codeword symbol. By removing this null symbol, the code is isomorphic to the $[2^k-1, k]$ binary Simplex codes~\cite{MS}.

The punctured Hadamard code is a variant of Hadamard codes. The generator matrix of $[2^k, k+1]$ punctured Hadamard codes is obtained by appending an all-ones row to $\mathbf{G}_k$. That is,
\begin{equation}\label{eq:GG}
\mathbf{G}'_k:=
\begin{bmatrix}
\mathbf{G}_k\\
\mathbf{1}_{2^k}\\
\end{bmatrix}.
\end{equation}
For example, the generator matrix of $[2^3, 4]$ punctured Hadamard codes is 
\begin{equation}\label{eq:G3}
\mathbf{G}'_3=
\begin{bmatrix}
\mathbf{G}_3\\
\mathbf{1}_8\\
\end{bmatrix}=
\begin{bmatrix}
0&0&0&0&1&1&1&1\\
0&0&1&1&0&0&1&1\\
0&1&0&1&0&1&0&1\\
1&1&1&1&1&1&1&1\\
\end{bmatrix}.
\end{equation}

However, \eqref{eq:GG} is not in the systematic form. To obtain the systematic form, the generator matrix is defined as
\begin{equation}\label{eq:G''_k}
\mathbf{G}''_k:=\mathbf{V}_{k+1} \mathbf{G}'_k,
\end{equation}
where 
\begin{equation}\label{eq:Vk+1}
\mathbf{V}_{k+1} :=
\begin{bmatrix}
\mathbf{I}_{k}&\mathbf{0}_k^T \\
\mathbf{1}_k & 1
\end{bmatrix}
\end{equation}
is a $(k+1)\times (k+1)$ binary non-singular matrix. For example, $\mathbf{G}''_3$ is
\begin{equation}\label{eq:GGs}
\mathbf{G}''_3=\mathbf{V}_4\cdot \mathbf{G}'_3=\begin{bmatrix}
0&0&0&0&1&1&1&1\\
0&0&1&1&0&0&1&1\\
0&1&0&1&0&1&0&1\\
1&0&0&1&0&1&1&0\\
\end{bmatrix},
\end{equation}
and
\begin{equation}\label{eq:V4}
\mathbf{V}_4=\begin{bmatrix}
1&0&0&0\\
0&1&0&0\\
0&0&1&0\\
1&1&1&1\\
\end{bmatrix}.
\end{equation}
From \eqref{eq:G''_k}, $\mathbf{G}''_k$ is obtained by replacing the $(k+1)$-th row of $\mathbf{G}'_k$ by the summation of all rows of $\mathbf{G}'_k$, and this leads that the Hamming weights of all $2^k$ columns of $\mathbf{G}''_k$ are odd. 

Let
\begin{equation}\label{eq:Ek}
\begin{aligned}
\mathbf{E}_{0}&:=[1],\\
\mathbf{E}_{k}&:=\begin{bmatrix}
\mathbf{E}_{k-1}& \neg\mathbf{E}_{k-1}\\
\end{bmatrix},
\end{aligned}
\end{equation}
where $\neg\mathbf{E}_{k-1}$ denotes the complement of a boolean vector $\mathbf{E}_{k-1}$, and the size of $\mathbf{E}_{k}$ is $2^{k}$. From \eqref{eq:recur}, $\mathbf{G}''_k$ can also be presented as
\begin{equation}\label{eq:G''_k2}
\mathbf{G}''_k=\begin{bmatrix}
\mathbf{G}_k\\
\mathbf{E}_k\\
\end{bmatrix}=
\begin{bmatrix}
\mathbf{0}_{2^{k-1}}& \mathbf{1}_{2^{k-1}}\\
\mathbf{G}_{k-1}&\mathbf{G}_{k-1}\\
\mathbf{E}_{k-1}& \neg\mathbf{E}_{k-1}
\end{bmatrix}.
\end{equation}

\subsection{Hamming codes}\label{sec:hamc}
The parity-check matrix $\mathbf{H}_k$ of $[2^k-1, 2^k-k-1]$ Hamming codes is defined as the concatenation of all non-zero columns in ascending order. Precisely,
\begin{equation}\label{equ:2}
\mathbf{H}_k:=\begin{bmatrix}
T^{k}(1)&T^{k}(2)&\dots&T^{k}(2^k-1)
\end{bmatrix}.
\end{equation}
The standard parity check matrix, with the form in \eqref{eq:33}, is obtained by reordering the columns of $\mathbf{H}_k$ such that the last $k$ columns form a identity matrix. Then from \eqref{eq:G0}, the generator matrix can be obtained from the standard parity check matrix accordingly. 
\if
For example, the $[7, 4]$ Hamming code has the generator matrix
\[
\bar{\mathbf{G}}_3=\begin{bmatrix}
1&0&0&0&0&1&1\\
0&1&0&0&1&0&1\\
0&0&1&0&1&1&0\\
0&0&0&1&1&1&1\\
\end{bmatrix},
\]
and the standard parity-check matrix is 
\[
\bar{\mathbf{H}}_3=\begin{bmatrix}
0&1&1&1&1&0&0\\
1&0&1&1&0&1&0\\
1&1&0&1&0&0&1\\
\end{bmatrix}.
\]
\fi

Further, the $[2^k, 2^k-k-1]$ extend Hamming code is obtained by appending an overall  parity-check to the $[2^k-1, 2^k-k-1]$ Hamming code. The parity-check matrix is defined as
\begin{equation}\label{eq:HH}
\mathbf{H}'_k:=
\begin{bmatrix}
\mathbf{0}_{k}^T&  \mathbf{H}_k\\
1&\mathbf{1}_{2^k-1}\\
\end{bmatrix}=\mathbf{G}'_k.
\end{equation}
Notably, the dual code of extended Hamming codes is the punctured Hadamard code. 
\if
For example, the parity-check matrix of $[8, 4]$ extended Hamming code is
\begin{equation}\label{equ:4}
\mathbf{H}'_3=\begin{bmatrix}
0&0&0&0&1&1&1&1\\
0&0&1&1&0&0&1&1\\
0&1&0&1&0&1&0&1\\
1&1&1&1&1&1&1&1\\
\end{bmatrix}.
\end{equation}
\fi
However, \eqref{eq:HH} is not systematic, and the systematic version is obtained by 
\[
\mathbf{H}''_k:=\mathbf{V}_{k+1} \mathbf{H}'_k=\mathbf{G}''_k,
\]
where $\mathbf{V}_{k+1}$ is defined in \eqref{eq:Vk+1}. 
\if
$\mathbf{H}''_k$ is a parity-check matrix for the Hsiao code~\cite{5391627}. 
\fi
\if
For example, $\mathbf{H}''_3$ is expressed as
\begin{equation}\label{eq:H''_3}
\mathbf{H}''_3=\mathbf{V}_4\cdot \mathbf{H}'_3=\begin{bmatrix}
0&0&0&0&1&1&1&1\\
0&0&1&1&0&0&1&1\\
0&1&0&1&0&1&0&1\\
1&0&0&1&0&1&1&0\\
\end{bmatrix},
\end{equation}
where $\mathbf{V}_4$ is shown in \eqref{eq:V4}. 
\fi
Notably, $\mathbf{H}''_k$ can be expressed as
\begin{equation}\label{eq:reH}
\mathbf{H}''_k=\begin{bmatrix}
\begin{bmatrix}
\mathbf{0}_k^T & \mathbf{H}_k
\end{bmatrix}\\
\mathbf{E}_k\\
\end{bmatrix}.
\end{equation}
\if
Further, the generator matrix can be obtained from $\mathbf{H}''_k$. For example, the reordered version of \eqref{eq:H''_3} is 
\[
\begin{bmatrix}
0&1&1&1&1&0&0&0\\
1&0&1&1&0&1&0&0\\
1&1&0&1&0&0&1&0\\
1&1&1&0&0&0&0&1\\
\end{bmatrix},
\]
where the last $4$ columns form a identity matrix. Then the generator matrix is
\[
\begin{bmatrix}
1&0&0&0&0&1&1&1\\
0&1&0&0&1&0&1&1\\
0&0&1&0&1&1&0&1\\
0&0&0&1&1&1&1&0\\
\end{bmatrix}.
\]
\fi

The encoding complexities of (extended)~Hamming codes are discussed as follows. For the $[2^k-1, 2^k-k-1]$ Hamming code, the Hamming weight of each row of $\mathbf{H}_k$ is $2^{k-1}$, and hence the number of ones in the submatrix $\mathbf{A}$~(presented in \eqref{eq:G0}) is $2^{k-1}k-k$. Thus, the number of XORs to encode Hamming codes is not more than $2^{k-1}k-k-(\text{the number of rows in }\mathbf{A})=k(2^{k-1}-2)$, which means the encoding circuit size of Hamming codes has a trivial upper bound $\mathcal{O}(n\log{n})$, where $n$ is the code length. Further, the number of XORs per message bit is not more than $k$.

Similarly, for the $[2^k, 2^k-k-1]$ extended Hamming code, as the number of ones in $\mathbf{E}_k$ in \eqref{eq:Ek} is $2^{k-1}$, the number of XORs to encoded extended Hamming codes is not more than $(k+1)(2^{k-1}-2)$. Further, the number of XORs per message bit is not more than $k+1$.

\begin{table}
	\begin{center}
		\caption{\label{tab:2}Parities associated with their checked positions of $\mathcal{C}$}
		\begin{tabular}{l|c}
			\hline\hline
			Parities & Checked position\\
			\hline
			$p_{0}$ & $0,1,3,5,7,9,11,\dots,29,31$\\
			\hline
			$p_{1}$ & $0,2 \ to \ 3,6 \ to \ 7,10 \ to \ 11,\dots,30 \ to \ 31$\\
			\hline
			$p_{2}$ & $0,4 \ to \ 7,12 \ to \ 15,20 \ to \ 23$\\
			\hline
			$p_{3}$ & $0,8 \ to \ 15,24 \ to \ 31$\\
			\hline
			$p_{4}$ & $0,16 \ to \ 31$\\
			\hline
			$p_{5}$ & $1 \ to \ 31$\\
			\hline\hline
		\end{tabular}
	\end{center}
\end{table}

\subsection{Shortened Hamming codes}\label{sec:2E}
The parity-check matrix $\mathbf{H}'=[\mathbf{A}' \quad \mathbf{I}_{k}]$ of a shortened Hamming codes $\mathcal{C}'$ with $k$ parity bits is a submatrix of a parity-check matrix $\mathbf{H}=[\mathbf{A} \quad \mathbf{I}_{k}]$ of $[2^k-1, 2^k-k-1]$ Hamming codes, where $\mathbf{A}'$ is a submatrix of $\mathbf{A}$. In shortened Hamming codes, let the message length $m$ and the number of parity bits $k$ satisfy $2^{k-1}-k<m \leq 2^{k}-k-1$. 
In Section 15.3 of \cite{Warren:2002:HD:515297}, the author presents an efficient software implementation for a shortened Hamming code with $6$ parity bits $\mathbf{p}=[p_{0} \ p_{1} \ \dots \ p_{5}]$ and $32$ message bits $\mathbf{x}=[x_{0} \ x_{1} \ \dots \ x_{31}]$. The $6$ parity bits are calculated by XORing a portion of the message bits indicated in Table \ref{tab:2}. Notably, the shortened Hamming code is then extended by appending one more parity bit $p_{6}=x_{0}+\dots+x_{31}+p_{0}+\dots+p_{5}$ to the codeword. Section \ref{sec:5D} presents the generalization of \cite{Warren:2002:HD:515297}, which is the extended version, with $k$ parity bits and $m=2^{k-1}$ message bits, of the shortened Hamming code. Furthermore, the complexity in terms of circuit size and parallel depth, of the generalized encoding algorithm is analyzed.

\section{Exact lower bounds on encoding circuit size}\label{sec:5}
In this section, we demonstrate the exact lower bounds on encoding circuit size of Hadamard codes and Hamming codes. The first two subsections show the exact lower bounds on encoding circuit size of (punctured)~Hadamard encoding algorithms, followed by the exact lower bounds on encoding circuit size of (extended)~Hamming encoding algorithms. In the rest of this paper, the circuit we considered only contains XOR operations, thus the circuit size is the number of XORs in the circuit. 

\subsection{Hadamard codes}\label{sec:3A}
\begin{theorem}\label{le:0}
A $[2^k-1, k]$ Hadamard code requires at least $2^k-k-1$ XORs in encoding process based on the generator matrix. 
\end{theorem}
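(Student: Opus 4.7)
The plan is to prove this lower bound by a simple wire-counting argument, viewing each wire value in the encoding circuit as a linear Boolean function of the $k$ message inputs.

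First I would set up the circuit model. Any encoder, built from 2-input XOR gates and operating on the $k$ input wires carrying $x_0,\ldots,x_{k-1}$, produces wires whose values are linear combinations (over $\mathbb{F}_2$) of the inputs. If the circuit uses $N$ gates, then the total number of wires in the circuit is $k+N$ (the $k$ inputs plus one output wire per gate), and therefore the circuit computes at most $k+N$ distinct Boolean functions of the inputs.

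Next I would identify the required outputs. After removing the zero column, the generator matrix $\mathbf{G}_k$ of the $[2^k-1,k]$ code has as its columns the $2^k-1$ distinct nonzero vectors of $\mathbb{F}_2^k$. Thus the $2^k-1$ output bits of the encoder are, as functions of $(x_0,\ldots,x_{k-1})$, the $2^k-1$ distinct nonzero linear forms. So any valid encoding circuit must realize, on some subset of its wires, these $2^k-1$ pairwise distinct functions. Combining this with the bound in the previous paragraph gives
\[
k + N \;\ge\; 2^k - 1,
\]
hence $N \ge 2^k - k - 1$, which is the claimed bound.

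I do not expect any serious obstacle here; the only subtlety is to be careful that the counting argument concerns \emph{distinct} wire values (Boolean functions of the inputs), not distinct physical wires — two output wires could in principle carry the same value, but since the $2^k-1$ target codeword bits are pairwise distinct linear functions, at least $2^k-1$ wires with pairwise distinct values must be present, and exactly $k$ of these come free from the input wires.
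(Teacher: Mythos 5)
Your proposal is correct and is essentially the paper's own argument in slightly more formal dress: the paper likewise observes that the $2^k-1$ nonzero columns of the generator matrix are pairwise distinct, so each of the $2^k-k-1$ parity symbols (those of Hamming weight at least two) is a distinct linear form requiring its own XOR gate. Your wire-counting formulation (at most $k+N$ distinct linear functions on the wires of an $N$-gate circuit, versus $2^k-1$ distinct functions required) is the same count, just stated globally rather than gate-by-gate.
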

\begin{proof}
The $[2^k-1, k]$ Hadamard code consists of $k$ message symbols $\{x_i\}_{i=1}^{k}$ and $2^{k}-k-1$ parities. 
Since any two columns of the generator matrix  are different, the parities of Hadamard codes are generated by different formulas, where each formula takes at least one XOR. Hence, the number of XORs taken is at least  $2^k-k-1$. 
\end{proof}

\subsection{Punctured Hadamard codes}\label{sec:3B}
This subsection discusses the encoding circuit size of punctured Hadamard codes, on the systematic version and the non-systematic version.

\subsubsection{Systematic punctured Hadamard codes}
Section~\ref{sec:hadamard} shows that the generator matrix for $[2^k, k+1]$ systematic punctured Hadamard codes consists of all  columns of size $k+1$ with odd Hamming weights. 

For $[n,k]$ systematic binary linear codes, a encoding circuit $\mathbf{x}\mathbf{G}=\begin{bmatrix}\mathbf{x}& \mathbf{x}\mathbf{A}^T\end{bmatrix}$ can be represented as a signal-flow graph $G=(V,A)$, where $V$ denotes the set of nodes, and $A$ denotes the set of arrows. Particularly, $V$ can be divided into three subsets, termed as a set of input nodes $\mathbf{o}_{in}$, output nodes $\mathbf{o}_{out}$ and hidden nodes $\mathbf{o}_{hid}$, respectively, where $\mathbf{o}_{in}$ corresponds to the message symbols, $\mathbf{o}_{out}$ corresponds to the parities, and $\mathbf{o}_{hid}$ corresponds to the intermediate results. Thus, $|\mathbf{o}_{in}|=k$, and $|\mathbf{o}_{out}|=n-k$. In the graph, each XOR operating on two nodes generates a node of $\mathbf{o}_{out}$ or $\mathbf{o}_{hid}$, 
and thus the indegree of the node of $\mathbf{o}_{out}$ or $\mathbf{o}_{hid}$ is two. In contrast, the indegree of each  $\mathbf{o}_{in}$ is zero. Figure \ref{fig:fig4} gives an example for the encoding of $[4, 3]$ punctured Hadamard codes, where $\mathbf{o}_{in}=\{x_0, x_1, x_2\}$, $\mathbf{o}_{hid}=\{z_0=x_0\oplus x_1\}$, and $\mathbf{o}_{out}=\{y_0=z_0\oplus x_2\}$. 

For a class of $[n,k]$ codes, the corresponding signal-flow graph is not unique. Let $\mathcal{G}_k$ denote a set of signal-flow graphs for the systematic $[2^k, k+1]$ punctured Hadamard codes. For a signal-flow graph $G\in \mathcal{G}_k$, $G$ depicts an algorithm with the number of XORs, $|\mathbf{o}_{out}|+|\mathbf{o}_{hid}|$. As $|\mathbf{o}_{out}|=2^k-k-1$ is a constant in $\mathcal{G}_k$, the objective is to find out a $G\in \mathcal{G}_k$, such that $|\mathbf{o}_{hid}|$ of $G$ is minimal.

From Figure \ref{fig:fig4}, it is clear that $|\mathbf{o}_{hid}|\geq 1$ in $\mathcal{G}_2$, which corresponds to $[4, 3]$ punctured Hadamard codes. This gives the following Lemma.
\begin{lemma}\label{l:2}
For any signal-flow graph of systematic punctured Hadamard codes, $\mathbf{o}_{hid}$ contains a node expressing the XOR operating on two nodes in $\mathbf{o}_{in}$.
\end{lemma}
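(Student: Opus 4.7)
The plan is to exploit the fact that, since the columns of $\mathbf{G}''_k$ all have odd Hamming weight and the weight-$1$ columns are used for the systematic identity part, every column of the parity submatrix $\mathbf{A}^T$ has odd weight at least $3$. Consequently, every parity bit in $\mathbf{o}_{out}$ is, as an $\mathbb{F}_2$-linear functional of the message, the XOR of at least three distinct message symbols. Because such a linear functional is uniquely determined by its coefficient vector, no parity bit can ever equal the XOR of only two message symbols.

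With this observation in hand, I would argue directly. Take any signal-flow graph $G \in \mathcal{G}_k$ and pick a topological ordering of its nodes (this exists since $G$ is a DAG). Look at the first non-input node $v$ in this order. Since $v$ has indegree $2$ and nothing computed before $v$ except the input nodes $\mathbf{o}_{in}$ is yet available, both in-neighbors of $v$ must lie in $\mathbf{o}_{in}$. Thus $v$ represents the XOR of exactly two message symbols.

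Now I would rule out $v \in \mathbf{o}_{out}$: if $v$ were an output node, the encoder would compute some parity bit as the XOR of exactly two message symbols, contradicting the structural fact above that every parity bit is an XOR of at least three message symbols. Hence $v \in \mathbf{o}_{hid}$, and $v$ is precisely a hidden node whose two inputs come from $\mathbf{o}_{in}$, which is what Lemma~\ref{l:2} asserts.

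The only step that requires any care is justifying the weight-$\geq 3$ claim for the parity columns; everything else is a short topological-order argument. I expect no substantive obstacle, but I would make sure to note explicitly that in the systematic form $\mathbf{G}''_k = [\mathbf{I}_{k+1}\ \mathbf{A}^T]$, among all length-$(k+1)$ columns of odd weight exactly $k+1$ of them (the standard basis vectors of weight $1$) are consumed by $\mathbf{I}_{k+1}$, leaving the remaining $2^k - k - 1$ columns to form $\mathbf{A}^T$, each necessarily of odd weight $\geq 3$.
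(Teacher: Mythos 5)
Your proof is correct and follows essentially the same route as the paper: the paper also takes the first gate in the circuit (whose two inputs must both be message nodes) and rules out its membership in $\mathbf{o}_{out}$ by observing that its encoding vector has Hamming weight $2$, which is even, whereas every column of $\mathbf{G}''_k$ has odd weight. Your explicit topological-ordering argument and the weight-$\geq 3$ refinement are just more detailed versions of the same idea.
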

\begin{proof}
It is obviously that the signal-flow graph has at least one node representing the XOR operating on two nodes in $\mathbf{o}_{in}$. This node is not in $\mathbf{o}_{out}$, as the Hamming weight of the encoding vector is $2$, which is not odd. Thus, this node is in $\mathbf{o}_{hid}$.
\end{proof}

For a signal-flow graph of $[n,k]$ binary linear codes, the corresponding operation on each node can be written as $\mathbf{x}a^T$, where $\mathbf{x}$ is the message vector and $a$ is the binary encoding vector. For example, in Figure \ref{fig:fig4}, we have $\mathbf{x}=[x_0\; x_1\; x_2]$, and the encoding vectors of $x_0$, $x_1$ and $x_2$ are $[1\;0\;0]$, $[0\;1\;0]$ and $[0\;0\;1]$, respectively. The encoding vector of $z_0$ is $[1\;1\;0]$, and the encoding vector of $y_0$ is $[1\;1\;1]$. The following lemma indicates that, we can merge the two nodes when both nodes have the same encoding vector.
\begin{lemma}\label{l:1.5}
For a signal-flow graph $G=(V,A)$ of $[n,k]$ binary linear codes, if there exists two nodes possessing the same encoding vector, one can obtain a new signal-flow graph $G'=(V',A')$, and $|V'|=|V|-1$.
\end{lemma}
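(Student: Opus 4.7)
The plan is to construct the smaller graph $G'$ by merging the two duplicate nodes, say $v_1$ and $v_2$, that share the same encoding vector $a$. First I would reduce to the case where at least one of the two nodes, say $v_2$, lies in $\mathbf{o}_{out}\cup \mathbf{o}_{hid}$. This is immediate because every node in $\mathbf{o}_{in}$ has an encoding vector equal to a distinct standard basis vector of $\mathbb{F}_2^k$, so two different input nodes cannot share the same encoding vector; hence if one of $v_1,v_2$ is in $\mathbf{o}_{in}$, the other must not be, and we relabel so that $v_2\notin \mathbf{o}_{in}$.

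Next I would explicitly define $V':=V\setminus\{v_2\}$ and build $A'$ as follows: delete every arrow incident to $v_2$ from $A$, and for every outgoing arrow $(v_2,u)\in A$ insert the arrow $(v_1,u)$ into $A'$ (identifying duplicated arrows if needed). If $v_2\in \mathbf{o}_{out}$ is designated as the gate producing a particular parity bit of the codeword, I would mark $v_1$ as that output in $G'$. Since $v_1$ and $v_2$ realise the same linear form $\mathbf{x}a^T$ on every message $\mathbf{x}$, substituting $v_1$ for $v_2$ at every downstream gate preserves the value computed at every node of $V'$, so $G'$ still encodes the same $[n,k]$ code, and by construction $|V'|=|V|-1$.

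Finally I would check that $G'$ is a legal signal-flow graph: input nodes keep indegree $0$; for any non-input node $u\neq v_1$, its indegree in $G'$ equals its indegree in $G$ because we only rerouted the source $v_2$ to $v_1$, and the two incoming arrows of $u$ are still present (with $v_1$ now appearing in place of $v_2$); and $v_1$'s own indegree is unchanged because we never touched its incoming arrows.

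The main subtlety will be the degenerate corner case in which some $u$ had both $v_1$ and $v_2$ as its two predecessors in $G$: after the rerouting, $u$ would receive two parallel arrows from $v_1$ and compute $v_1\oplus v_1=0$. This does not break the cardinality bound $|V'|=|V|-1$ claimed by the lemma (the redundant arrow can simply be kept formally, or $u$ itself can be dropped, giving an even smaller graph), but it is the only point in the argument that deserves explicit care; I would dispose of it by noting that such a $u$ can be absorbed without increasing node count, so the bare bound of the lemma follows in every case.
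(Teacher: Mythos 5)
Your construction is essentially the paper's: delete one of the two duplicate nodes and reroute its outgoing arrows so that they emanate from the survivor, giving $|V'|=|V|-1$. You in fact treat more of the bookkeeping than the paper does (the case where the deleted node is an output, and the degenerate case of a gate fed by both duplicates).

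The one ingredient the paper has that you are missing is the criterion for \emph{which} of the two nodes to delete. The paper topologically sorts $G$ and deletes the \emph{later} node $v_j$, keeping the earlier $v_i$; this guarantees that no successor of $v_j$ has a path back to $v_i$, so rerouting $(v_j,b)$ to $(v_i,b)$ cannot create a directed cycle and $G'$ remains a legal (acyclic) signal-flow graph. Your rule only ensures that the deleted node $v_2$ is not an input. That settles the case where one of the two duplicates is an input node (an input has indegree $0$, hence cannot lie downstream of $v_2$, so rerouting to it is safe), but when \emph{neither} duplicate is an input your choice of which to delete is arbitrary: if the kept node $v_1$ happens to be a descendant of $v_2$ (possible in a wasteful circuit, e.g.\ $v_1=v_2\oplus w$ with $w$ having the all-zero encoding vector), redirecting $v_2$'s out-arrows to originate at $v_1$ closes a cycle and the resulting $G'$ is not a valid signal-flow graph. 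The fix is exactly the paper's: among the two duplicates, always keep the one that comes first in a topological order. With that one-line amendment your argument is complete, and your observation that two distinct input nodes can never share an encoding vector is a point the paper itself leaves implicit.
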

\begin{proof}
The two nodes possessing the same encoding vector are denoted as $v_i$ and $v_j$. As $G$ has no directed cycles, the topological sort can be applied on $G$. W.L.O.G., assume $v_i$ precedes $v_j$ in the topologically sorted order. The following gives the definition of the new graph $G'=(V',A')$, where $V'=V\setminus \{v_j\}$. For each $(a,b)\in A$, $(a,b)\in A'$ if $a\neq v_j$ and $b\neq v_j$; or else, for each $(v_j,b)\in A$, then $(v_i,b)\in V'$. By removing $v_j$, the new graph $G'$ has $|V'|=|V|-1$ nodes.
\end{proof}

Recall that $\mathbf{G}_k''$ is the generator matrix of $[2^k, k+1]$ systematic punctured Hadamard codes. Then, we have the following lemma.
\begin{lemma}\label{lemma:2}
Given $\mathbf{x}=[x_0\dots x_{k-1}]$ and two integers $u, v$, $0\leq u<v < k$, a new vector $\mathbf{x}'=[x_0\dots x_{v-1}\; x_u \; x_v \dots x_{k-1}]$ is obtained by inserting $x_u$ to $\mathbf{x}$. Then $\mathbf{x}'\mathbf{G}_k''$ is a permutation of $\mathbf{x}[\mathbf{G}_{k-1}''|\mathbf{G}_{k-1}'']$.
\end{lemma}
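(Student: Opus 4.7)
The plan is to show the claim coordinate-by-coordinate: for each column $g_j$ of $\mathbf{G}_k''$ (which has length $k+1$), the scalar $\mathbf{x}'g_j$ can be rewritten as $\mathbf{x}\,h_j$ for a canonical length-$k$ vector $h_j$, and then to prove that the multiset $\{h_j\}$ ranging over all $2^k$ columns of $\mathbf{G}_k''$ coincides, up to permutation, with the multiset of columns of $[\mathbf{G}_{k-1}''\,|\,\mathbf{G}_{k-1}'']$.

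First I would expand $\mathbf{x}'g_j$ using $x'_i = x_i$ for $i<v$, $x'_v = x_u$, and $x'_i = x_{i-1}$ for $i>v$. Splitting the sum at the inserted coordinate and collecting the two terms multiplied by $x_u$, one obtains
\begin{equation*}
\mathbf{x}'g_j \;=\; \sum_{\substack{0\le i<v \\ i\ne u}} x_i\,g_j[i] \;+\; x_u\bigl(g_j[u]\oplus g_j[v]\bigr) \;+\; \sum_{v\le i<k} x_i\,g_j[i{+}1].
\end{equation*}
Hence $\mathbf{x}'g_j=\mathbf{x}\,h_j$, where $h_j\in\mathbb F_2^k$ is the \emph{compressed column} obtained from $g_j$ by keeping entries outside $\{u,v\}$, replacing the entry at position $u$ by $g_j[u]\oplus g_j[v]$, and deleting position $v$ (with the tail shifted down).

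Second I would count preimages. Given any target $h\in\mathbb F_2^k$ of odd weight (i.e.\ any column of $\mathbf{G}_{k-1}''$), a length-$(k{+}1)$ vector $g$ satisfies the compression identity $h_g=h$ iff $g$ agrees with $h$ outside $\{u,v\}$ and $g[u]\oplus g[v]=h[u]$. This leaves exactly one free bit (say $g[v]\in\{0,1\}$), so there are exactly two such $g$. A brief parity check confirms that both choices of $g$ have odd Hamming weight: in the compression, the pair $g[u],g[v]$ contributes $g[u]+g[v]\pmod 2 = h[u]$ to the parity, while all other entries contribute the same to $g$ and to $h$; thus $g$ is odd iff $h$ is odd. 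Consequently both preimages lie among the columns of $\mathbf{G}_k''$.

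Putting the pieces together, each of the $2^{k-1}$ odd-weight vectors $h$ of length $k$ is hit exactly twice as $j$ ranges over the $2^k$ columns of $\mathbf{G}_k''$, which is precisely the column multiset of $[\mathbf{G}_{k-1}''\,|\,\mathbf{G}_{k-1}'']$. Hence there exists a permutation $\pi$ of $\{0,\dots,2^k-1\}$ such that $(\mathbf{x}'\mathbf{G}_k'')_j = (\mathbf{x}\,[\mathbf{G}_{k-1}''\,|\,\mathbf{G}_{k-1}''])_{\pi(j)}$ for every $j$, proving the lemma. I expect the only mildly subtle point to be the parity check showing both preimages remain odd-weight; everything else is bookkeeping on the index shift caused by inserting $x_u$ at position $v$.
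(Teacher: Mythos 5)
Your proposal is correct and takes essentially the same route as the paper's proof: both rewrite $\mathbf{x}'\mathbf{G}_k''$ as $\mathbf{x}$ times a compressed matrix whose $u$-th row is $G[u]\oplus G[v]$, and both establish a $2$-to-$1$ correspondence between the odd-weight columns of $\mathbf{G}_k''$ and those of $\mathbf{G}_{k-1}''$ via the parity-preservation observation. The only cosmetic difference is that the paper writes out the two preimages $\bar{g}_0,\bar{g}_1$ explicitly, whereas you count them through the single free bit.
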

\begin{proof}
From the definition of $\mathbf{x}'$, we have $x_u'=x_v'$. Then $\mathbf{x}'\mathbf{G}_k''$ can be written as ($G[i]$ denotes the row $i$ of $\mathbf{G}_k''$):
\[
\mathbf{x}'\mathbf{G}_k''=\sum_i x_i' G[i]=x_u'G[u]\oplus x_v'G[v]\oplus\sum_{i\neq u,v} x_i' G[i]=x_u'(G[u]\oplus G[v])\oplus\sum_{i\neq u,v} x_i' G[i]=\mathbf{x}\bar{\mathbf{G}}_k,
\]
where $\mathbf{x}=[x_0\dots x_{k-1}]$, and 
\begin{equation}\label{eq:barG}
\bar{\mathbf{G}}_k=
\begin{bmatrix}
G[0]\\
\vdots \\
G[u]\oplus G[v] \\
\vdots \\
G[v-1]\\
G[v+1]\\
\vdots \\
G[k]\\
\end{bmatrix}.
\end{equation}

To prove $\mathbf{x}\bar{\mathbf{G}}_k$ is a permutation of $\mathbf{x}[\mathbf{G}_{k-1}''|\mathbf{G}_{k-1}'']$, this is equivalent to show that, for each column $g^T$ of $\mathbf{G}_{k-1}''$, there exist two columns of $\bar{\mathbf{G}}_k$, and the content of both columns are the same with $g^T$.

For every column $g^T=[g_0\dots g_{k-1}]^T$ of $\mathbf{G}_{k-1}''$, two $(k+1)$-element column vectors are defined as
\begin{equation}
\begin{aligned}
\bar{g}_0^T&=[g_0\dots g_{v-1}\; 0\; g_{v}\dots g_{k-1}]^T,\\
\bar{g}_1^T&=[g_0\dots g_u\oplus 1\dots g_{v-1}\; 1\; g_{v}\dots g_{k-1}]^T.
\end{aligned}
\end{equation}
As the Hamming weight of $g$ is odd, the Hamming weights of both $\bar{g}_0$ and $\bar{g}_1$ are also odd, and hence $\bar{g}_0^T$ and $\bar{g}_1^T$ are two distinct columns of $\mathbf{G}_{k}''$. Further, \eqref{eq:barG} shows that the $\bar{g}_0^T$ and $\bar{g}_1^T$ become $g^T$ in $\bar{\mathbf{G}}_k$. This completes the proof.
\end{proof}

\begin{lemma}\label{lextra:0}
Given $G_k\in \mathcal{G}_k$ with $m$ hidden nodes, there exists a  $G_{k-1}\in \mathcal{G}_{k-1}$ with the number of hidden nodes at most $m-1$.
\end{lemma}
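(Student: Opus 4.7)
The plan is to build $G_{k-1}$ directly from $G_k$ by collapsing the two inputs that Lemma~\ref{l:2} identifies, and then invoking Lemma~\ref{l:1.5} to merge duplicated nodes. First, by Lemma~\ref{l:2}, $G_k$ has a hidden node $z$ computing $x_u \oplus x_v$ for some $0 \leq u < v \leq k$. Let $\tilde{G}_k$ be obtained from $G_k$ by treating $x_v$ as the same signal as $x_u$; structurally $\tilde{G}_k$ still has $2^k + m$ nodes, but each encoding vector is now viewed in $\mathbb{F}_2^{k}$. By Lemma~\ref{lemma:2}, the $2^k$ codeword symbols produced by $\tilde{G}_k$ form, up to a permutation, two copies of the $[2^{k-1},k]$ punctured Hadamard codeword, so each of the $2^{k-1}$ distinct codeword values appears at exactly two output positions.

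We then sweep through $\tilde{G}_k$ and apply Lemma~\ref{l:1.5} to every pair of nodes sharing the same encoding vector in $\mathbb{F}_2^{k}$. This produces three guaranteed groups of merges: (a) the inputs $x_u$ and $x_v$ now share the vector $e_u$ and collapse, saving $1$ node; (b) the $2^{k-1}-k$ pairs of non-systematic parity outputs carrying equal values merge pairwise, saving $2^{k-1}-k$ nodes; and (c) the $k-1$ parity outputs whose values coincide with a systematic input $x_i$ (for $i \neq u,v$) are absorbed into that input, saving $k-1$ nodes. Together (a)--(c) remove $2^{k-1}$ nodes.

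The main obstacle is the hidden node $z$ itself. After the identification its encoding vector becomes $e_u + e_v = \mathbf{0}$, so $z$ computes the constant zero, and Lemma~\ref{l:1.5} does not apply to $z$ directly unless another node also carries the zero vector. We bypass this via $z$'s consumers: because $z$ is a hidden node it feeds at least one other node, and for every surviving consumer $w = z \oplus a$ the encoding vectors of $w$ and $a$ agree in $\tilde{G}_k$ since $z$ contributes nothing. Applying Lemma~\ref{l:1.5} to each such $(w,a)$ pair removes every remaining consumer of $z$, after which $z$ is an orphan hidden node and may be deleted. This step removes at least one additional node beyond (a)--(c).

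Adding up the removals, the resulting graph has at most $(2^k + m) - (2^{k-1}+1) = 2^{k-1} + m - 1$ nodes; by construction it has exactly $k$ input nodes and $2^{k-1}-k$ parity output nodes (its output set realizes the $[2^{k-1},k]$ punctured Hadamard code by Lemma~\ref{lemma:2}), so its hidden-node count is at most $(2^{k-1}+m-1) - k - (2^{k-1}-k) = m-1$, as required.
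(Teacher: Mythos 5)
Your proof follows the same route as the paper's: use Lemma~\ref{l:2} to find the hidden node computing $x_u \oplus x_v$, identify $x_v$ with $x_u$ so that Lemma~\ref{lemma:2} turns the $2^k$ outputs into two copies of the $[2^{k-1},k]$ codeword, merge duplicated nodes via Lemma~\ref{l:1.5}, and eliminate the now-redundant hidden node to reach the count $m-1$. Your version is in fact more careful than the paper's at the one delicate point --- the paper merely asserts that the hidden node ``can be eliminated, as the node expresses $0=x_u\oplus x_u$,'' whereas you explicitly merge each of its consumers $w=z\oplus a$ with $a$ before deleting the orphaned node --- but the approach and the bookkeeping are essentially identical.
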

\begin{proof}
For any $G_k\in \mathcal{G}_k$, Lemma~\ref{l:2} shows that contains a hidden node expressing the XOR of two input nodes. Assume the two nodes correspond to $x_u$ and $x_v$ of the message vector $\mathbf{x}$. From Lemma~\ref{lemma:2}, $G_k$ can be converted to the encoding of $[2^{k-1}, k]$ systematic punctured Hadamard codes. The message vector $\mathbf{x}$ is converted to $\mathbf{x}'=[x_0\dots x_{v-1}\; x_u \; x_v \dots x_{k-1}]$, which is the input of $G_k$. Then the output is $\mathbf{x}[\mathbf{G}_{k-1}''|\mathbf{G}_{k-1}'']$ in a specific order. 

The above signal-flow graph for $[2^{k-1}, k]$ systematic punctured Hadamard codes can be simplified further. First, the number of input and output nodes in $G_k$ is $2^k$. By the merge operation in Lemma~\ref{l:1.5}, the number of input and output nodes can be reduced to $2^{k-1}$. Second, the hidden node expressing the XOR of two input nodes can be eliminated, as the node expresses $0=x_u\oplus x_u$ on $\mathbf{x}'$. As a result, in the new signal-flow graph, the number of output and input nodes is $2^{k-1}$, and the number of hidden nodes is at most $m-1$.
\end{proof}

\begin{lemma}\label{l:3}
For any $G_k\in \mathcal{G}_k$, the number of hidden nodes of $G_k$ is at least $k-1$.
\end{lemma}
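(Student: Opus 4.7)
The plan is to prove the lemma by induction on $k$, using Lemma~\ref{lextra:0} as the engine of the inductive step. The underlying intuition is clean: Lemma~\ref{lextra:0} tells us that every signal-flow graph in $\mathcal{G}_k$ can be reduced to one in $\mathcal{G}_{k-1}$ while shedding at least one hidden node, so the minimum hidden-node count decreases by at most one as $k$ decreases by one, giving a linear lower bound.

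First I would establish the base case. For $k=1$, the generator matrix $\mathbf{G}_1''$ consists of all length-$2$ columns of odd Hamming weight, which are exactly the two columns of the $2\times 2$ identity; the encoding is the identity map, requires no XORs, and so any $G_1\in\mathcal{G}_1$ has $|\mathbf{o}_{hid}|\ge 0 = k-1$. (Alternatively, one can start at $k=2$, where Lemma~\ref{l:2} forces at least one hidden node, as already remarked for Figure~\ref{fig:fig4}.)

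Next I would carry out the inductive step. Assume every $G_{k-1}\in\mathcal{G}_{k-1}$ has at least $k-2$ hidden nodes, and fix an arbitrary $G_k\in\mathcal{G}_k$ with $m$ hidden nodes. Applying Lemma~\ref{lextra:0}, we obtain some $G_{k-1}\in\mathcal{G}_{k-1}$ whose hidden-node count is at most $m-1$. By the inductive hypothesis, $m-1\ge k-2$, so $m\ge k-1$, as desired.

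The argument itself is essentially a one-line application of the previous lemma, so there is no real obstacle beyond bookkeeping: one needs to check that the base case is actually consistent with the construction of $\mathbf{G}_k''$ and that the inductive hypothesis is applied to \emph{every} signal-flow graph in $\mathcal{G}_{k-1}$ (not only to the particular one produced by the reduction in Lemma~\ref{lextra:0}). Both points are straightforward, so the proof should be short and formal.
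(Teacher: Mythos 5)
Your proposal is correct and is essentially the paper's own argument: the paper simply unrolls your induction, repeatedly applying Lemma~\ref{lextra:0} to descend from $G_k$ to some $G_2'\in\mathcal{G}_2$ with at most $m-k+2$ hidden nodes and then invoking Lemma~\ref{l:2} to get $m-k+2\ge 1$. Your formal induction (with the $k=2$ base case via Lemma~\ref{l:2}) is the same proof in different packaging.
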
 
\begin{proof}
Assume there exists a $G_k'\in \mathcal{G}_k$ with $m$ hidden nodes. From Lemma~\ref{lextra:0}, one can obtain $G_{k-1}'\in \mathcal{G}_{k-1}$ with at most $m-1$ hidden nodes. In this way, one can obtain $G_2'\in \mathcal{G}_2$ with at most $m-k+2$ hidden nodes. From Lemma~\ref{l:2}, we have
\[
m-k+2\geq 1 \Rightarrow m\geq k-1.
\]
This completes the proof.
\end{proof}

With Lemma~\ref{l:3}, the lower bound is given below.
\begin{theorem}\label{le:1}
The  $[2^k, k+1]$ systematic punctured Hadamard code requires at least $2^k-2$ XORs in encoding process based on the generator matrix.
\end{theorem}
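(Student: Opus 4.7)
The plan is to observe that Theorem~\ref{le:1} follows almost directly by combining the structural counting of the signal-flow graph with Lemma~\ref{l:3}. The setup in Section~\ref{sec:3B} fixes the number of parity (output) nodes at $|\mathbf{o}_{out}| = 2^k - k - 1$, since each parity of a systematic $[2^k, k+1]$ punctured Hadamard code corresponds to one column of $\mathbf{A}^T$ in the generator matrix. What remains variable across $G \in \mathcal{G}_k$ is only the count of hidden nodes.

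First I would recall that each non-input node in a signal-flow graph has indegree exactly two and represents exactly one XOR gate. Therefore the total number of XORs used by the algorithm encoded by $G$ equals $|\mathbf{o}_{out}| + |\mathbf{o}_{hid}|$, with no double counting, since input nodes correspond to message bits and contribute no operation. This reduces the theorem to lower-bounding $|\mathbf{o}_{hid}|$.

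Next I would invoke Lemma~\ref{l:3}, which asserts that for any $G_k \in \mathcal{G}_k$, $|\mathbf{o}_{hid}| \geq k-1$. Substituting gives
\begin{equation*}
|\mathbf{o}_{out}| + |\mathbf{o}_{hid}| \;\geq\; (2^k - k - 1) + (k - 1) \;=\; 2^k - 2,
\end{equation*}
and since the encoding circuit of any systematic $[2^k, k+1]$ punctured Hadamard code corresponds to some $G_k \in \mathcal{G}_k$, this is a lower bound on the circuit size.

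There is essentially no obstacle at this final step: all the nontrivial work has already been done in Lemmas~\ref{l:2}, \ref{l:1.5}, \ref{lemma:2}, and especially the inductive argument of Lemma~\ref{lextra:0} that feeds Lemma~\ref{l:3}. The only bookkeeping to watch is confirming that the counting identity $\#\text{XORs} = |\mathbf{o}_{out}| + |\mathbf{o}_{hid}|$ applies to every graph in $\mathcal{G}_k$ (guaranteed by the indegree-two property stated when $\mathcal{G}_k$ was introduced), and that the $2^k - k - 1$ parities are indeed all distinct and non-trivial, so none of them coincides with an input node and can be counted as an output. Once this is noted, the theorem follows in a single line from Lemma~\ref{l:3}.
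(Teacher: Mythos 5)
Your proposal is correct and follows essentially the same route as the paper's own proof: both express the encoder as a signal-flow graph in $\mathcal{G}_k$, count the $2^k-k-1$ output nodes, and add the lower bound $|\mathbf{o}_{hid}|\geq k-1$ from Lemma~\ref{l:3} to obtain $2^k-2$. The extra bookkeeping you note (each non-input node costs exactly one XOR, and the parity columns are distinct from the unit vectors) is implicit in the paper's setup but harmless to make explicit.
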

\begin{proof}
The encoding of $[2^k, k+1]$ systematic punctured Hadamard codes is expressed an a signal-flow graph $G_k\in\mathcal{G}_k$. From Lemma~\ref{l:3}, $G_k$ has $2^k-k-1$ output nodes and at least $k-1$ hidden nodes. Thus, it requires $(2^k-k-1)+(k-1)=2^k-2$ XORs in encoding. This completes the proof.
\end{proof}


\subsubsection{Non-systematic punctured Hadamard codes}
The generator matrix $\mathbf{G}_k'$ of non-systematic punctured Hadamard code~(NPHC) is defined in \eqref{eq:GG}. 

From \eqref{eq:GG}, there is only one column of $\mathbf{G}_k'$ which has Hamming weight $1$. For example, for $\mathbf{G}_3'$ in \eqref{eq:G3}, the first column is $[0\; 0\; 0\; 1]^T$. The other $2^k-1$ symbols on the codeword shall be computed. Refer to Theorem \ref{le:0}, we get a theorem below.

\begin{theorem}\label{l:4}
A $[2^k, k+1]$ non-systematic punctured Hadamard code requires at least $2^k-1$ XORs in encoding process based on the generator matrix.
\end{theorem}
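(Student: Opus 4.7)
The plan is to mirror the column-counting argument used in Theorem~\ref{le:0}, adapted to the enlarged generator matrix $\mathbf{G}_k'$. The first step is to classify the columns of $\mathbf{G}_k'$ by Hamming weight. From the definition \eqref{eq:GG}, every column of $\mathbf{G}_k'$ has the form $\bigl[T^k(i)^T \; 1\bigr]^T$ for some $i\in [0,2^k-1]$, so its Hamming weight equals $\mathrm{wt}(T^k(i))+1$. The only column of weight $1$ is the one for $i=0$, namely $[0\;0\;\dots\;0\;1]^T$; it outputs the message symbol $x_k$ directly and needs no XOR. The remaining $2^k-1$ columns all have Hamming weight at least $2$, so each corresponds to a codeword symbol that is an XOR of two or more distinct message bits.

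Next I would invoke the reasoning already established in the proof of Theorem~\ref{le:0}. Any circuit computing a codeword symbol whose formula involves at least two message bits must produce that symbol as the output of some XOR gate in the encoding circuit. Because the $2^k$ columns of $\mathbf{G}_k'$ are pairwise distinct, the $2^k-1$ nontrivial codeword symbols are pairwise distinct as linear functions of the message vector $\mathbf{x}=[x_0\;\dots\;x_k]$. Two distinct linear functions cannot be the output value of the same gate, so the $2^k-1$ final XOR gates associated with these symbols are all different.

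Counting these distinct gates gives at least $2^k-1$ XOR operations in any encoding circuit for the non-systematic punctured Hadamard code, which is the claimed bound. The only thing to check carefully is the bookkeeping that exactly one column of $\mathbf{G}_k'$ has Hamming weight $1$; this is immediate from the fact that $T^k$ is a bijection onto $\mathbb{F}_2^k$ with $T^k(0)$ the unique zero vector. There is no real obstacle, since the argument reuses the structural observation of Theorem~\ref{le:0} (distinct columns force distinct output gates) and only requires adjusting the count from $2^k-k-1$ trivial-free parities to $2^k-1$, reflecting that in the non-systematic setting only a single column (rather than $k+1$) is ``free of XORs.''
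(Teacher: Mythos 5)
Your proposal is correct and follows essentially the same argument as the paper's proof: identify the unique weight-one column of $\mathbf{G}_k'$ as the only XOR-free output, and use the pairwise distinctness of the remaining $2^k-1$ columns to conclude that each must be produced by a distinct XOR gate. The only difference is that you spell out the weight classification and the ``distinct linear functions require distinct gates'' step more explicitly than the paper does.
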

\begin{proof}
The $[2^k, k+1]$ non-systematic punctured Hadamard code consists of a message symbol $x_0$ and $2^k-1$ computed symbols. Assume the encoding algorithm takes $S$ XORs, and those XORs produce $S$ symbol $\mathcal{G}=\{g_i\}_{i=1}^{S}$. Then each symbol of the codeword is chosen from $\mathcal{G}$. As any two columns of $\mathbf{G}_k'$ are different, this leads that $S\geq 2^k-1$. This completes the proof.
\end{proof}

\subsection{Hamming codes}\label{sec:CHam}
First, the encoding circuit size of Hamming codes is discussed. Then the encoding circuit size of extended Hamming codes is considered.

The proof of lower bound relies on the property of dual codes and the transposition principle, which states that if an algorithm for a matrix-vector product by a matrix $M$ exists, then there exists an algorithm for a matrix-vector product by its transpose $M^T$ in similar complexity. A formal definition is stated below.
\begin{theorem}\label{th:1}
(Transposition principle~\cite{1956-bordewijk}): Given an $i$-by-$j$ matrix $M$ without zero rows or columns, let $a(M)$ denote the minimum number of operations to compute the product $v_i M$ with a vector $v_i$ of size $i$. Then there exists an algorithm to compute $v_j M^T$ in $a(M)+j - i$ arithmetic operations.
\end{theorem}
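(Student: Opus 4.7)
The plan is to reuse the signal-flow graph formalism developed in Section~\ref{sec:3B}. First I would fix an optimal algorithm computing $v_i \mapsto v_i M$ using $a(M)$ operations and represent it as a DAG $G=(V,A)$ with $i$ input nodes (one per entry of $v_i$), $j$ output nodes (one per entry of $v_i M$), and some hidden nodes, where each non-input node computes a linear combination of its in-neighbors with weights drawn from the underlying field. Because $M$ has no zero column and no zero row, after a routine dead-code elimination I may assume every non-input node has in-degree at least one and every non-output node has out-degree at least one. Since a non-input node of in-degree $d$ contributes $d-1$ arithmetic operations, the total count satisfies
\[
a(M) \;=\; |A| \,-\, (|V| - i).
\]

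Next I would construct the transposed graph $G^T=(V,A^T)$ by reversing every arrow while preserving the edge weights, and then treat the original $j$ output nodes as new inputs and the original $i$ input nodes as new outputs. A routine induction along a topological ordering shows that in any such linear DAG the coefficient of input $k$ in output $\ell$ equals the sum over directed paths from $k$ to $\ell$ of the products of edge weights along those paths. Reversing every arrow sends each $k$-to-$\ell$ path bijectively to an $\ell$-to-$k$ path with the same weight product, so $G^T$ evaluates the linear map $v_j \mapsto v_j M^T$. Establishing this correctness statement is the subtlest step, but it reduces to a direct bijection between the two path systems, with the operations of summation and fan-out being interchanged at every internal node.

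Finally, I would count the operations performed by $G^T$. The edge set still has size $|A|$; after swapping the roles of inputs and outputs, the number of non-input nodes becomes $|V|-j$, so the algorithm encoded by $G^T$ uses exactly
\[
|A| \,-\, (|V| - j) \;=\; \bigl(|A| - (|V|-i)\bigr) + (j - i) \;=\; a(M) + j - i
\]
arithmetic operations, matching the claimed bound. The main obstacle, as indicated above, is the correctness of the reversal, since the combinatorial operation count is just arithmetic once one trusts that $G^T$ computes $M^T$; the hypotheses that $M$ has no zero row and no zero column are used precisely to rule out degenerate nodes that would otherwise spoil the formula $|A|-(|V|-i)$ in either direction.
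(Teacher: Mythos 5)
The paper offers no proof of this theorem at all: it is imported as a black box from the transposition-principle literature (the citation to Bordewijk), so there is no in-paper argument to compare against. Your proof is the standard graph-reversal (Tellegen) argument, and its skeleton is correct: the path-weight bijection shows the reversed DAG computes $v_j M^T$, and comparing $|A|-(|V|-i)$ with $|A|-(|V|-j)$ yields exactly the shift $j-i$. The hypotheses on zero rows and columns are used exactly where you say they are.

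One technical point your normalization does not cover: an output node of the original circuit may have positive out-degree, because its value is reused in computing another output (e.g.\ the optimal circuit for $[x_1+x_2,\; x_1+x_2+x_3]$ reuses the first output). After reversal such a node is supposed to be an input of $G^T$ yet still has incoming arrows, so it must both receive the injected input value and accumulate its in-neighbours; a node of former out-degree $d$ then costs $d$ rather than $d-1$ additions, and the count becomes $|A|-(|V|-j)$ plus the number of reused outputs, which breaks your final identity. The standard repair is to split each such output $o$ into a compute node $o_1$ (inheriting all of $o$'s arrows) and a terminal copy node $o_2$ joined by the single arrow $o_1\to o_2$: this increases $|A|$ and $|V|$ by one each, so $|A|-(|V|-i)$ and the addition count of the forward circuit are unchanged, and afterwards every output has out-degree zero, so the reversal and your count go through verbatim. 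With that amendment the argument is complete and matches the classical proof of the cited result.
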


\subsubsection{Hamming codes}
This part shows the numbers of XORs used in the encoding of $[2^k-1, 2^k-k-1]$ Hamming codes.

By Theorem \ref{th:1}, we have the main theorem as follows.
\begin{theorem}\label{th:2}
A $[2^k-1, 2^k-k-1]$ Hamming code requires at least $2^{k+1}-3k-2$ XORs in encoding process based on the generator matrix.
\end{theorem}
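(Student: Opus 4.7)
The plan is to transfer the encoding lower bound already established for the Hadamard/Simplex code (Theorem~\ref{le:0}) to its dual, the Hamming code, via the transposition principle (Theorem~\ref{th:1}). First, I would reorder the columns of $\mathbf{H}_k$ in~\eqref{equ:2} so that the $k$ columns of Hamming weight one occupy the rightmost positions, producing the systematic parity-check matrix $[\mathbf{A}\mid \mathbf{I}_k]$, where $\mathbf{A}$ is the $k \times (2^k-k-1)$ matrix consisting of the binary columns of weight at least two. Column reorderings do not change the encoding complexity (they merely relabel input/output wires), so it suffices to bound the complexity in this standard form. In this form, Hamming encoding is the matrix-vector product $\mathbf{x}\mathbf{A}^T$ with $\mathbf{x}$ of length $2^k - k - 1$, while the dual (Simplex) encoding is $\mathbf{x}^\perp \mathbf{A}$ with $\mathbf{x}^\perp$ of length $k$; the latter coincides (up to a column permutation) with the $[2^k-1,k]$ Hadamard encoder of Section~\ref{sec:hadamard}.

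Next, I would verify the hypothesis of Theorem~\ref{th:1} for $M = \mathbf{A}^T$, an $(2^k-k-1) \times k$ matrix. The columns of $\mathbf{A}^T$ are the rows of $\mathbf{A}$: every row of $\mathbf{H}_k$ has Hamming weight $2^{k-1}$, and removing the $k$ weight-one columns subtracts exactly one from each row's weight, so each row of $\mathbf{A}$ has weight $2^{k-1}-1 \geq 1$ for $k\geq 2$. The rows of $\mathbf{A}^T$ are the columns of $\mathbf{A}$, which are weight-$\geq 2$ binary vectors and hence nonzero. Thus $\mathbf{A}^T$ has no zero row or zero column.

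Applying Theorem~\ref{th:1} with $M = \mathbf{A}^T$, $i = 2^k-k-1$, and $j = k$ produces an algorithm for $\mathbf{x}^\perp \mathbf{A}$ using $a(\mathbf{A}^T) + k - (2^k-k-1)$ XORs, where $a(\cdot)$ denotes the minimum number of XORs for the corresponding product. Rewriting this as a lower bound on $a(\mathbf{A}^T)$ yields
\[
a(\mathbf{A}^T) \;\geq\; a(\mathbf{A}) + (2^k-k-1) - k \;=\; a(\mathbf{A}) + 2^k - 2k - 1.
\]
Theorem~\ref{le:0} supplies $a(\mathbf{A}) \geq 2^k - k - 1$, and combining the two inequalities gives
\[
a(\mathbf{A}^T) \;\geq\; (2^k - k - 1) + (2^k - 2k - 1) \;=\; 2^{k+1} - 3k - 2,
\]
which is the claimed bound.

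The only delicate step is invoking the transposition principle in the contrapositive direction: Theorem~\ref{th:1} is phrased as an \emph{upper} bound on $a(M^T)$ in terms of $a(M)$, but the useful content here is the resulting \emph{lower} bound on $a(\mathbf{A}^T)$ coming from the known lower bound on $a(\mathbf{A})$. Once this reading is fixed, the hypothesis check and the final arithmetic are routine, and the column-reordering argument ensures that the bound proved for the systematic form transfers verbatim to the original code defined by~\eqref{equ:2}.
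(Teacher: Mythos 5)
Your proposal is correct and follows essentially the same route as the paper: both transfer the Hadamard lower bound of Theorem~\ref{le:0} to the Hamming encoder through the transposition principle of Theorem~\ref{th:1}, with the same arithmetic $a(\mathbf{A}^T)\geq a(\mathbf{A})+(2^k-k-1)-k$. The only difference is presentational --- the paper argues by contradiction (an overly cheap Hamming encoder would yield a Hadamard encoder beating Theorem~\ref{le:0}) while you rearrange the inequality directly, and you additionally verify the no-zero-row/column hypothesis, which the paper leaves implicit.
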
 
\begin{proof}
The statement is proved by contradiction. Assume the encoding of Hamming codes $\mathbf{x}_1\mathbf{G}=[\mathbf{x}_1\; \mathbf{x}_1\mathbf{A}]$ only requires $2^{k+1}-3k-2-\epsilon$ XORs, where $\epsilon$ is a positive integer and $\mathbf{G}$ is the standard generator matrix of A $[2^k-1, 2^k-k-1]$ Hamming codes. From Theorem \ref{th:1}, there exists an algorithm to compute $\mathbf{x}_2\mathbf{A}^T$ in 
\[
2^{k+1}-3k-2-\epsilon+(k-(2^{k}-k-1))=2^{k}-k-1-\epsilon
\]
XORs. Note that  $[\mathbf{A}^T\ \mathbf{I}]$ is the generator matrix of the Hadamard codes and the above result contradicts Theorem \ref{le:0}. 
\end{proof}

\subsubsection{Extended Hamming codes}
Like Theorem \ref{th:2}, we present corresponded Theorem for extended Hamming code. 
\begin{theorem}\label{th:3}
A $[2^{k}, 2^{k}-k-1]$ extended Hamming code requires at least $2^{k+1}-2k-4$ XORs in encoding.
\end{theorem}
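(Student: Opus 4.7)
The strategy is to imitate the proof of Theorem \ref{th:2}: combine the transposition principle (Theorem \ref{th:1}) with the lower bound for the dual code, which in this case is the systematic punctured Hadamard code whose encoding requires at least $2^k - 2$ XORs by Theorem \ref{le:1}. The plan is a contradiction argument: suppose one can encode $[2^k, 2^k-k-1]$ extended Hamming codes with fewer than $2^{k+1}-2k-4$ XORs, and then transpose the resulting circuit to contradict Theorem \ref{le:1}.

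First I would set up the matrix dimensions. Write the standard generator matrix of the extended Hamming code as $\mathbf{G}=[\mathbf{I}_{2^k-k-1}\;\mathbf{A}^T]$, where $\mathbf{A}$ is a $(k+1)\times(2^k-k-1)$ binary matrix. Since the identity part contributes no XOR, the total encoding cost of $\mathbf{x}_1\mathbf{G}$ equals the cost of computing the parity product $\mathbf{x}_1\mathbf{A}^T$. By \eqref{eq:Gt} together with the observation that the dual of the extended Hamming code is the punctured Hadamard code, the matrix $[\mathbf{A}\;\mathbf{I}_{k+1}]$ is precisely $\mathbf{G}''_k$, the systematic generator of the $[2^k,k+1]$ punctured Hadamard code.

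Second, assume for contradiction that there is an encoding circuit for the extended Hamming code using $2^{k+1}-2k-4-\epsilon$ XORs for some positive integer $\epsilon$. Viewing $\mathbf{A}^T$ as an $i\times j$ matrix with $i=2^k-k-1$ and $j=k+1$, Theorem \ref{th:1} then produces a circuit computing $\mathbf{x}_2\mathbf{A}$ using at most
\begin{equation*}
(2^{k+1}-2k-4-\epsilon)+(k+1)-(2^k-k-1)=2^k-2-\epsilon
\end{equation*}
XORs. This circuit, together with the free copy of $\mathbf{x}_2$ that forms the systematic part, is precisely an encoder for the $[2^k,k+1]$ systematic punctured Hadamard code using strictly fewer than $2^k-2$ XORs, contradicting Theorem \ref{le:1}. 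Hence at least $2^{k+1}-2k-4$ XORs are required.

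I do not expect any real obstacle: every ingredient (the transposition principle, the identification of the dual, and the lower bound for systematic punctured Hadamard codes) is already in place. The only step requiring mild care is verifying that $\mathbf{A}^T$ has no zero rows or columns, which is needed to apply Theorem \ref{th:1}. This is immediate from \eqref{eq:G''_k2}: each column of $\mathbf{G}''_k$ has odd, hence nonzero, Hamming weight, and no row of $\mathbf{G}''_k$ is zero, so neither $\mathbf{A}$ nor $\mathbf{A}^T$ has a zero row or column.
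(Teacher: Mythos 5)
Your proposal is correct and is essentially the proof the paper has in mind: the paper omits the argument for Theorem \ref{th:3}, stating only that it mirrors Theorem \ref{th:2}, and your write-up is exactly that analogue — transposition principle applied to the parity submatrix, with the correct dimension count $(k+1)-(2^k-k-1)$ yielding $2^k-2-\epsilon$ and a contradiction with Theorem \ref{le:1} instead of Theorem \ref{le:0}. The added check that $\mathbf{A}^T$ has no zero rows or columns is a detail the paper glosses over even in Theorem \ref{th:2}, so you have if anything been slightly more careful.
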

Proof of Theorem \ref{th:3} is similar to that for Theorem~\ref{th:2} such that we omit it. 

\section{Encoding Algorithms for (punctured) Hadamard codes}\label{sec:4.0}
This section presents the encoding algorithms of $[2^k, k]$ Hadamard codes and $[2^k, k+1]$ punctured Hadamard codes, and analyzes their circuit sizes and parallel depth. We also show that the proposed algorithms achieves these lower bounds of circuit sizes derived in Section \ref{sec:3A} and \ref{sec:3B}.

\subsection{Encoding algorithm with Gray-code ordering}
First, the columns of $\mathbf{G}_{k}$ are permuted by the Gray codes. For example, $\mathbf{G}_{3}$ can be rearranged as
\begin{equation}\label{eq:17}
\bar{\mathbf{G}}_{3}=\left[
\begin{array}{c|cc|cccc|c}
0& 0 & 0 & 0& 1&1& 1& 1\\
0&0&1&1&1&1&0&0\\
0&1&1&0&0&1&1&0\\
\end{array}
\right].
\end{equation}
Given $\mathbf{x}=[x_0\ x_1\  x_{2}]$, the codeword $\mathbf{y}=[y_0\ \cdots\  y_{7}]$ is obtained via $\mathbf{y}=\mathbf{x}\bar{\mathbf{G}}_{3}$. As shown in \eqref{eq:17}, there is only one different bit in each pair of adjacent columns. Based on this property, each $y_i$ can be obtained by XORing $y_{i-1}$ and a message bit. For example, $y_{2}=y_{1}+x_{1}$ and $y_{4}=y_{3}+x_{0}$. Specifically, as $y_{1}=x_{2}$, $y_{3}=x_{1}$, $y_{7}=x_{0}$ are message symbols, it does not  require XOR operations. Thus, the Hadamard codes can be encoded with $2^{k}-k-1$ XORs, which achieves the lower bound given in Theorem \ref{le:0}.

Note that $\bar{\mathbf{G}}_{k}$ can be divided into $k+1$ submatrices as shown in \eqref{eq:17}, where the first block is a zero column, and other $k$ blocks are started with the column corresponding to a message bit. Clearly, the encoding for each block can be calculated simultaneously. However, the symbols in each block shall be calculated serially, and thus the parallel depth is $p_{1}=(size \ of \ largest \ block) -1$. From the pigeon hole principle, the size of largest block is larger or equal to $(2^{k}-1)/k$, thus $p_{1} \geq (2^{k}-1)/k-1$.

$\bar{\mathbf{G}}_{k}$ can be extended into the generator matrix $\bar{\mathbf{G}}_{k}''$ for the systematic $[2^k, k+1]$ punctured Hadamard codes. For example, given $\mathbf{x}=[x_0\ \cdots\ x_{3}]$, the codeword $\mathbf{y}=[y_0\ \cdots\ y_{7}]$ is obtained via $\mathbf{y}=\mathbf{x}\bar{\mathbf{G}}_{3}''$, where
\begin{equation}\label{eq:18}
\bar{\mathbf{G}}_{3}''=\left[
\begin{array}{c|cc|cccc|c}
0& 0 & 0 & 0& 1&1& 1& 1\\
0&0&1&1&1&1&0&0\\
0&1&1&0&0&1&1&0\\
1&0&1&0&1&0&1&0\\
\end{array}
\right].
\end{equation}
As shown in \eqref{eq:18}, there are two different bits in each pair of adjacent columns such that each $y_{i}$ can be obtained by XORing $y_{i-1}$, $x_{3}$, and another message bit. For example, $y_{4}=y_{3}+x_{0}+x_{3}$. Let $x_{i}'=x_{i}+x_{3}$, for $i\in[0,k-1]$. This requires $k$ XORs. Then each $y_{i}$ can be obtained by XORing $y_{i-1}$ and a symbol of $\{x_{i}'\}_{i\in[0,k-1]}$. For example, $y_{4}=y_{3}+x_{0}'$. Thus, the $[2^k, k+1]$ punctured Hadamard codes can be encoded with $2^{k}-1$ XORs, which is larger than the lower bound showed in Theorem \ref{le:1}. Similarly, the parallel depth of this encoding algorithm is larger or equal to $(2^{k}-1)/k$.

\subsection{Proposed encoding algorithms}\label{sec:eee}
Given the message vector $\mathbf{x}_k=[x_0\dots x_{k-1}]$, the codeword $\mathbf{y}_k=[y_0\dots y_{2^{k}-1}]$ is defined as 
\begin{equation}\label{eq:defyk}
\mathbf{y}_k=\mathbf{x}_k\mathbf{G_k}.
\end{equation}
\subsubsection{Hadamard codes}
The proposed encoding algorithm is based on the recursive structure of the generator matrix \eqref{eq:recur}. We divide $\mathbf{x}_k$ and $\mathbf{y}_k$ into
\begin{equation}\label{eq:xk}
\mathbf{x}_k=[x_0|\mathbf{x}'_k],
\end{equation}
\begin{equation}\label{eq:yk}
\mathbf{y}_k=[\mathbf{y}_{k-1}|\mathbf{y}'_{k-1}],
\end{equation}
where $\mathbf{x}'_k=[x_1\dots x_{k-1}]$, $\mathbf{y}_{k-1}=[y_0\dots y_{2^{k-1}-1}]$ and $\mathbf{y}'_{k-1}=[y_{2^{k-1}}\dots y_{2^{k}-1}]$. From \eqref{eq:recur}, we have
\begin{equation}\label{eq:had20}
\mathbf{y}_{k-1}=\mathbf{x}'_k\mathbf{G_{k-1}},
\end{equation}
\begin{equation}\label{eq:had2}
\begin{aligned}
\mathbf{y}'_{k-1}&=\mathbf{x}'_k\mathbf{G_{k-1}}\oplus [x_0\dots x_0]=\mathbf{y}_{k-1}\oplus [x_0\dots x_0],
\end{aligned}
\end{equation}
where $\oplus$ denotes the bitwise XOR.
\eqref{eq:had2} shows that $\mathbf{y}'_{k-1}$ can be computed from $\mathbf{y}_{k-1}$ in $1$ XOR per bit. Based on~\eqref{eq:defyk} and \eqref{eq:had20}, like computing $\mathbf{y}_k$, we can apply the same approach on $\mathbf{y}_{k-1}$. Thus, $\mathbf{y}_k$ can be computed recursively with the basis case $\mathbf{y}_{1}=x_{k-1}[0 \quad 1]$. The algorithm is presented in Algorithm~\ref{alg:5}. Note that, from $\mathbf{G_{k-1}}$, it can be seen that the first element in $\mathbf{y}'_{k-1}$ is $x_0$ and this is true for each iteration in the recursion. Hence, in line 5 of Algorithm~\ref{alg:5}, we can assign $x_0$ to the first element of $\mathbf{y}_{2}$ without performing the XOR operation.

\begin{algorithm}[t]
\caption{$\mathrm{P}_{1}(\mathbf{x}, k)$: Encoding of Hadamard codes}\label{alg:5}
\KwIn{$\mathbf{x}=[x_0\quad x_1\quad  \dots \quad  x_{k-1}]$, and the message length $k$}
\KwOut{$\mathbf{y}=[y_0\quad y_1\quad  \dots \quad  y_{2^{k}-1}]$}
\If{$k=1$}{
\Return $[0\quad x_0]$
}
Call $\mathbf{y}_1\leftarrow \mathrm{P}_{1}(\mathbf{x}', k-1)$, where $\mathbf{x}'=[x_1  \dots  x_{k-1}]$\\
$\mathbf{y}_2\leftarrow 
\begin{bmatrix}
x_0 & \mathbf{y}_1[1]\oplus x_0 &\dots &\mathbf{y}_1[2^{k-1}-1]\oplus x_0
\end{bmatrix}$\\
\Return $[\mathbf{y}_1 \quad\mathbf{y}_2]$\\
\end{algorithm}

\subsubsection{Punctured Hadamard codes}\label{sec:PHC}
For the non-systematic version \eqref{eq:GG}, the algorithm is similar to Algorithm~\ref{alg:5}. Given the message vector $\mathbf{x}=[x_0\quad x_1\quad  \dots \quad  x_{k}]$ of size $k+1$, the recursive operation of the algorithm will stop when $k=1$. Line 2 in Algorithm \ref{alg:5} is then modified as $\mathbf{y}\leftarrow [x_1\quad x_1\oplus x_0]$ and the $2^{k}$-element output $\mathbf{y}$ becomes the codeword. 

For simplicity, with a little abuse of notations, we use the same notations as the previous section to denote the message and codeword in this section. For the systematic version given in~\eqref{eq:G''_k2}, the codeword $\mathbf{y}_k$ is defined as 
\begin{equation}\label{eq:y_k}
\mathbf{y}_k=\mathbf{x}_k \mathbf{G}''_{k}=[y_0\quad \dots y_{2^{k}-1}],
\end{equation}
where $\mathbf{x}_{k}=[x_0 \dots  x_{k}]$ is the message vector. 
\if
Similar to \eqref{eq:xk} and \eqref{eq:yk}, $\mathbf{x}_k$ is divided into $x_0$ and $\mathbf{x}'_k=[x_1 \dots x_k]$, and $\mathbf{y}_k$ is divided into two equal parts $\mathbf{y}_{k-1}$ and $\mathbf{y}'_{k-1}$. 
\fi
From \eqref{eq:G''_k2}, \eqref{eq:xk} and \eqref{eq:yk}, \eqref{eq:y_k} can be written as
\begin{equation}
\begin{aligned}
&\left[ \mathbf{y}_{k-1}\; \mathbf{y}'_{k-1} \right]
=\left[ x_0\; \mathbf{x}'_k \right] 
\begin{bmatrix}
\mathbf{0}_{2^{k-1}}& \mathbf{1}_{2^{k-1}}\\
\mathbf{G}_{k-1}&\mathbf{G}_{k-1}\\
\mathbf{E}_{k-1}& \neg\mathbf{E}_{k-1}
\end{bmatrix}
\end{aligned}
\end{equation}
This gives
\begin{equation}\label{eq:recPHad}
\begin{aligned}
&\mathbf{y}_{k-1}
=\mathbf{x}'_k
\begin{bmatrix}
\mathbf{G}_{k-1}\\
\mathbf{E}_{k-1}
\end{bmatrix}
=\mathbf{x}'_k \mathbf{G}''_{k-1},\\
&\mathbf{y}'_{k-1}=[x_0\dots x_0]\oplus \mathbf{x}'_k
\begin{bmatrix}
\mathbf{G}_{k-1}\\
\neg\mathbf{E}_{k-1}
\end{bmatrix}=
\mathbf{y}_{k-1} \oplus [x_0\dots x_0] \oplus [x_k\dots x_k],
\end{aligned}
\end{equation}
In \eqref{eq:recPHad}, $\mathbf{y}_{k-1}$ can be computed with the same approach, recursively. The basis case is $\mathbf{y}_1=[x_k, x_{k-1}]$. The proposed encoding algorithm is given in Algorithm~\ref{alg:6}. Similar to Algorithm~\ref{alg:5}, from $\mathbf{G}''_{k-1}$, it can be seen that the first element in $\mathbf{y}'_{k-1}$ is $x_0$ and this is true for each iteration in the recursion. Hence, in line 6 of Algorithm~\ref{alg:6}, we can assign $x_0$ to the first element of $\mathbf{y}_{2}$ without performing the XOR operation.

\begin{algorithm}[t]
\caption{$\mathrm{P}_{2}(\mathbf{x}, k)$: Encoding of punctured Hadamard codes}\label{alg:6}
\KwIn{$\mathbf{x}=[x_0\quad x_1\quad  \dots \quad  x_{k}]$, and $k$}
\KwOut{$\mathbf{y}=[y_0\quad y_1\quad  \dots \quad  y_{2^{k}-1}]$}
\If{$k=1$}{
\Return $[x_1 \quad x_0]$
}
Call $\mathbf{y}_1\leftarrow \mathrm{P}_{2}(\mathbf{x}', k-1)$, where $\mathbf{x}'=[x_1  \dots  x_{k}]$\\
$t \leftarrow x_0 \oplus x_k$\\
$\mathbf{y}_2\leftarrow 
\begin{bmatrix}
x_0 & \mathbf{y}_1[1]\oplus t &\dots &\mathbf{y}_1[2^{k-1}-1]\oplus t
\end{bmatrix}$\\

\Return $[\mathbf{y}_1 \quad\mathbf{y}_2]$\\
\end{algorithm}

\subsection{Circuit sizes and the parallel depth}
\subsubsection{Hadamard codes}
The the number of XORs (circuit size) in Algorithm~\ref{alg:5} for $[2^{k}, k]$ Hadamard codes is denoted as $A_1(k)$. In Algorithm~\ref{alg:5}, Lines $1-3$ give the base case
\begin{equation}
A_1(1)=0.
\end{equation}
Line $4$ calls the procedure recursively. Note that since the first element in $\mathbf{y}_1$ is always $0$, the first element in $\mathbf{y}_2$ is always $x_0$. Hence, line $5$ requires $2^{k-1}-1$ additions as the first XOR among them can be replaced by the assignment. In summary, the recurrence relation is written as
\begin{equation}
A_1(k)=A_1(k-1)+2^{k-1}-1,
\end{equation}
and the solution is
\begin{equation}
\label{complexity-H}
\begin{aligned}
A_1(k)=2^{k}-k-1.
\end{aligned}
\end{equation}
For the parallel depth of Algorithm \ref{alg:5}, all the calculations in line $5$ can be paralleled. Thus, the parallel depth of Algorithm \ref{alg:5} is $k-1$, its recursive depth. 

\subsubsection{Punctured Hadamard codes}
The number of XORs for systematic codes in Algorithm~\ref{alg:6} is denoted as $A_2(k)$. In Algorithm~\ref{alg:6}, Lines $1-3$ give the base case
\begin{equation}
A_2(1)=0.
\end{equation}
Line $4$ calls the procedure recursively. According to the analysis of $A_1(k)$, line $5$ and $6$ require $2^{k-1}$ additions. Thus,
\begin{equation}
A_2(k)=A_2(k-1)+2^{k-1},
\end{equation}
and the solution is
\begin{equation}
\label{complexity-NSPH}
\begin{aligned}
A_2(k)=2^{k}-2.
\end{aligned}
\end{equation}
Now we discuss the parallel depth of Algorithm \ref{alg:6}. Note that line $5$ can be calculated in advance, which requires an extra layer in circuit. In addition to this extra layer,  Algorithm \ref{alg:5} and Algorithm \ref{alg:6} are  the same. since Algorithm \ref{alg:5}' parallel depth is $k-1$, the parallel depth of Algorithm \ref{alg:6} is $k$.

The complexities for the non-systemetic codes $A'_2(k)$ in Section \ref{sec:eee} is similar to Algorithm \ref{alg:5}. The base case requires $1$ XOR. 
\begin{equation}
A'_2(1)=1,
\end{equation}
\begin{equation}
A'_2(k)=A'_2(k-1)+2^{k-1},
\end{equation}
and hence
\begin{equation}
\label{complexity-SPH}
A'_2(k)=2^{k}-1.
\end{equation}
Thus, it requires $2^{k}-2$ XORs and $2^{k}-1$ XORs in the encoding of systematic and non-systematic punctured Hadamard codes, respectively.

In conclusion, from \eqref{complexity-H}, \eqref{complexity-NSPH}, and \eqref{complexity-SPH}, the lower bounds on encoding circuit size of Hadamard codes, presented in Theorem \ref{le:0}, \ref{le:1}, and \ref{l:4}, are achievable.

\section{Novel Encoding Algorithms for (Extended) Hamming codes}\label{sec:3}
This section presents the encoding algorithms of Hamming codes and extended Hamming code, where the recursive versions of the proposed algorithms are presented. We also show that the proposed algorithms achieves these lower bounds derived in Section \ref{sec:CHam}.


\subsection{Hamming Codes}
Given a codeword $\mathbf{y}\in \mathbb{F}_2^n$ of $[2^k-1, 2^k-k-1]$ Hamming codes, we have 
\begin{equation}\label{eq:43}
\mathbf{H}_k \mathbf{y}=0,
\end{equation}
where $\mathbf{H}_k$ is defined in \eqref{equ:2}. 
Then the encoding process can be described as follows. First, given $2^{k}-k-1$ message bits, we construct a $2^{k}$-element column vector $\mathbf{x}=[x_{0} \quad x_{1} \dots \quad x_{2^k-1}]^T$, where $x_{0}=0$, $x_{2^i}=0$, for $i=0,1,\dots ,k-1$, and other $2^{k}-k-1$ bits are message bits.  The bottom layer in Figure~\ref{fig:fig1} is an example of $\mathbf{x}$, with $k=3$, where the white nodes are set to be zero and the black nodes are filled with message bits. Then the parities are calculated by
\begin{equation}\label{equ:7_}
\mathbf{p}=[p_{1} \dots p_{k}]^T=\begin{bmatrix}
\mathbf{0}_k^{T} &\mathbf{H}_k
\end{bmatrix}\mathbf{x}.
\end{equation}
Then, let $x_{2^{i}}=p_{k-i}$ with $i=0,1,\dots,k-1$ and it can be verified that $\mathbf{H}_{k}[x_{1}\dots x_{n}]^{T}=0$, i.e., $[x_{1}\dots x_{n}]^{T}$ is the corresponding codeword. Further, we discuss how to calculate $\mathbf{p}$ more effectively. From \eqref{eq:Tki} and \eqref{equ:2}, \eqref{equ:7_} can be written as 
\begin{equation}\label{equ:12}
\begin{aligned}
\mathbf{p}&=\sum_{i=0}^{2^{k-1}-1}{T^{k}(i)x_{i}}+\sum_{i=0}^{2^{k-1}-1}{(T^{k}(i)+[1\ 0 \dots 0]^T)x_{i+2^{k-1}}}\\
&=\sum_{i=0}^{2^{k-1}-1}{T^{k}(i)(x_{i}+x_{i+2^{k-1}})}+[1\ 0 \dots 0]^T\sum_{i=2^{k-1}}^{2^{k}-1}{x_{i}}.
\end{aligned}
\end{equation}
As the first elements in $T^k(i)$ is $0$, with $0\le i\le 2^{k-1}-1$,  the first element in the column vector $\sum_{i=0}^{2^{k-1}-1}{T^{k}(i)(x_{i}+x_{i+2^{k-1}})}$ is $0$. Thus, from~\eqref{equ:7_} and~\eqref{equ:12}, we have 
\begin{equation}\label{eq:27}
p_1=x_{2^{k-1}}+\dots+x_{2^k-1}
\end{equation}
and
\begin{equation}\label{eq:28}
\begin{aligned}
\mathbf{p}'=[p_{2} \dots p_{k}]^{T}=\sum_{i=0}^{2^{k-1}-1}{T^{k}(i)(x_{i}+x_{i+2^{k-1}})}.
\end{aligned}
\end{equation}
Let $\mathbf{x}'=[x_0'\dots x_{2^{k-1}-1}']^T$ denote a $2^{k-1}$-element vector, where
\begin{equation}\label{eq:16}
x_{i}'=x_i+x_{2^{k-1}+i} \qquad\forall i\in [0, 2^{k-1}-1].
\end{equation}
Then \eqref{eq:28} can be written as
\begin{equation}\label{eq:29}
\begin{aligned}
\mathbf{p}'=&\sum_{i=0}^{2^{k-1}-1}{T^{k-1}(i)x_{i}'}=\begin{bmatrix}
\mathbf{0}_{k-1}^{T} &\mathbf{H}_{k-1}
\end{bmatrix}\mathbf{x}',
\end{aligned}
\end{equation}
which can be computed recursively by applying the same approach on $\mathbf{p}$.

\if
Figure \ref{fig:fig0} presents the signal-flow graph of calculating $\mathbf{H}_3\mathbf{x}^0$. 
\begin{figure}
\center
   \includegraphics[width=0.7\columnwidth]{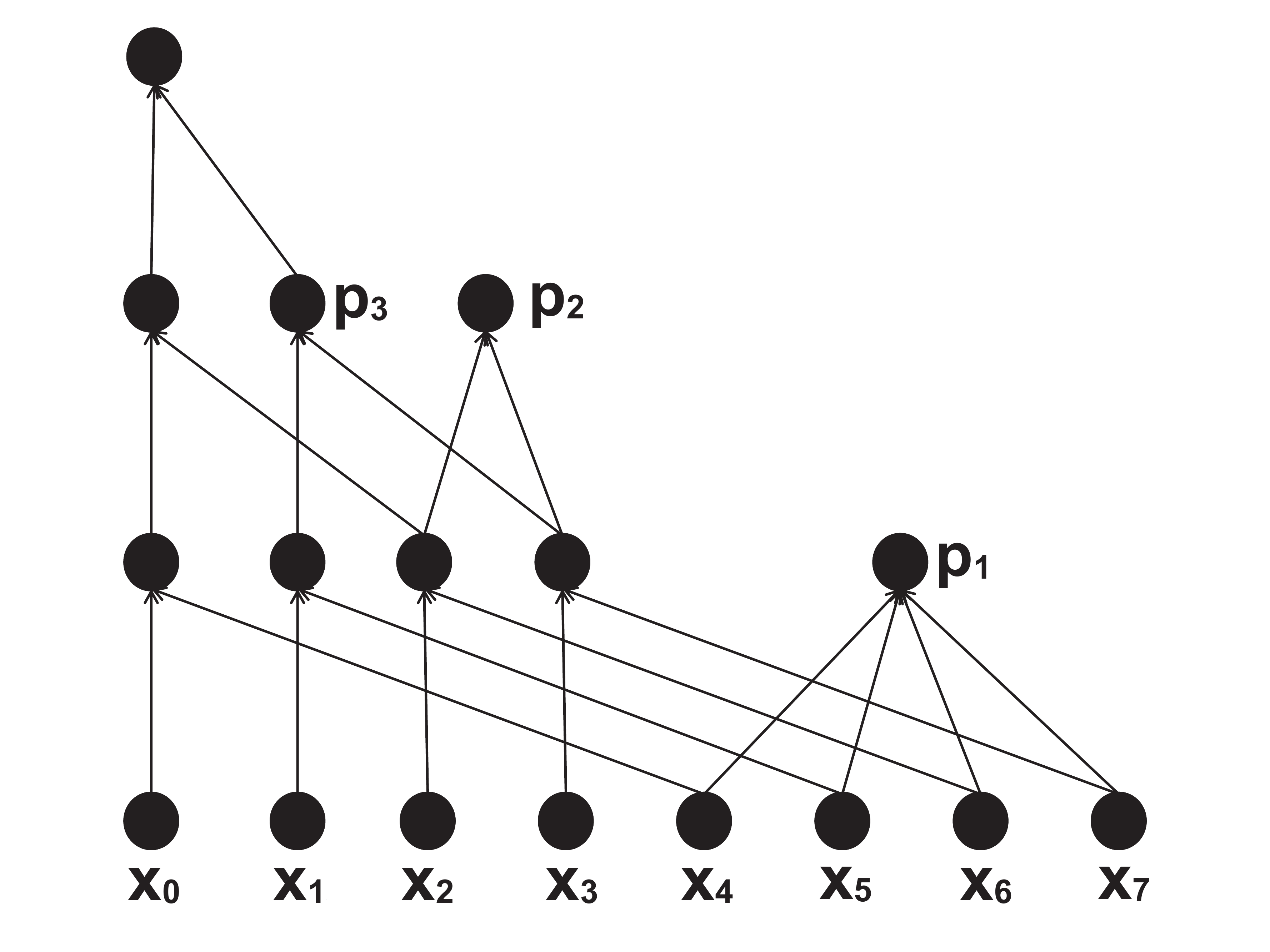}
\caption{\label{fig:fig0} Signal-flow graph of calculating $\mathbf{H}_3\mathbf{x}^0$}
\end{figure}
\fi


\begin{algorithm}[t]
\caption{$\mathrm{P}_{3}(\mathbf{x}, k)$: Encoding of Hamming codes (Recursive version)}\label{alg:1}
\KwIn{$\mathbf{x}=[x_0\quad  x_1\quad \dots \quad x_{2^{k}-1}]$, and $k$}
\KwOut{$\mathbf{p}$}
\If{$k=1$}{
\Return $x_1$
}
$S \leftarrow \sum_{i=0}^{2^{k-1}-1}x_{2^{k-1}+i}$\\
\For{$i= 0$ \KwTo $(2^{k-1}-1)$}{
$x_i'\leftarrow x_i+ x_{2^{k-1}+i}$
}
Call $\mathbf{p}' \leftarrow \mathrm{P}_{3}(\mathbf{x}', k-1)$,
where $\mathbf{x}'=[x_0'\quad x_1'\quad \dots\quad x_{2^{k-1}-1}']$\\
\Return $[S \quad \mathbf{p}']$
\end{algorithm}

Algorithm~\ref{alg:1} depicts the explicit steps, where the input $\mathbf{x}$ shall satisfy the constraints $x_0=0$ and $x_{2^i}=0,\forall i\in [0, k-1]$. In the algorithm, Lines 1-3 handle the basis case $k=1$. Line 4 calculates \eqref{eq:27}. Lines 5-7 perform \eqref{eq:16}. Line 8 calls the procedure recursively on the new input $\mathbf{x}'$. Finally, Line 9 returns the result, which is the vector of parities. Figure~\ref{fig:fig1} presents the circuit of the proposed algorithm at $k=3$, where the white nodes store zeros $x_0=x_1=x_2=x_4=0$, and other nodes are filled with message bits.

\begin{figure}
\center
   \includegraphics[width=0.7\columnwidth]{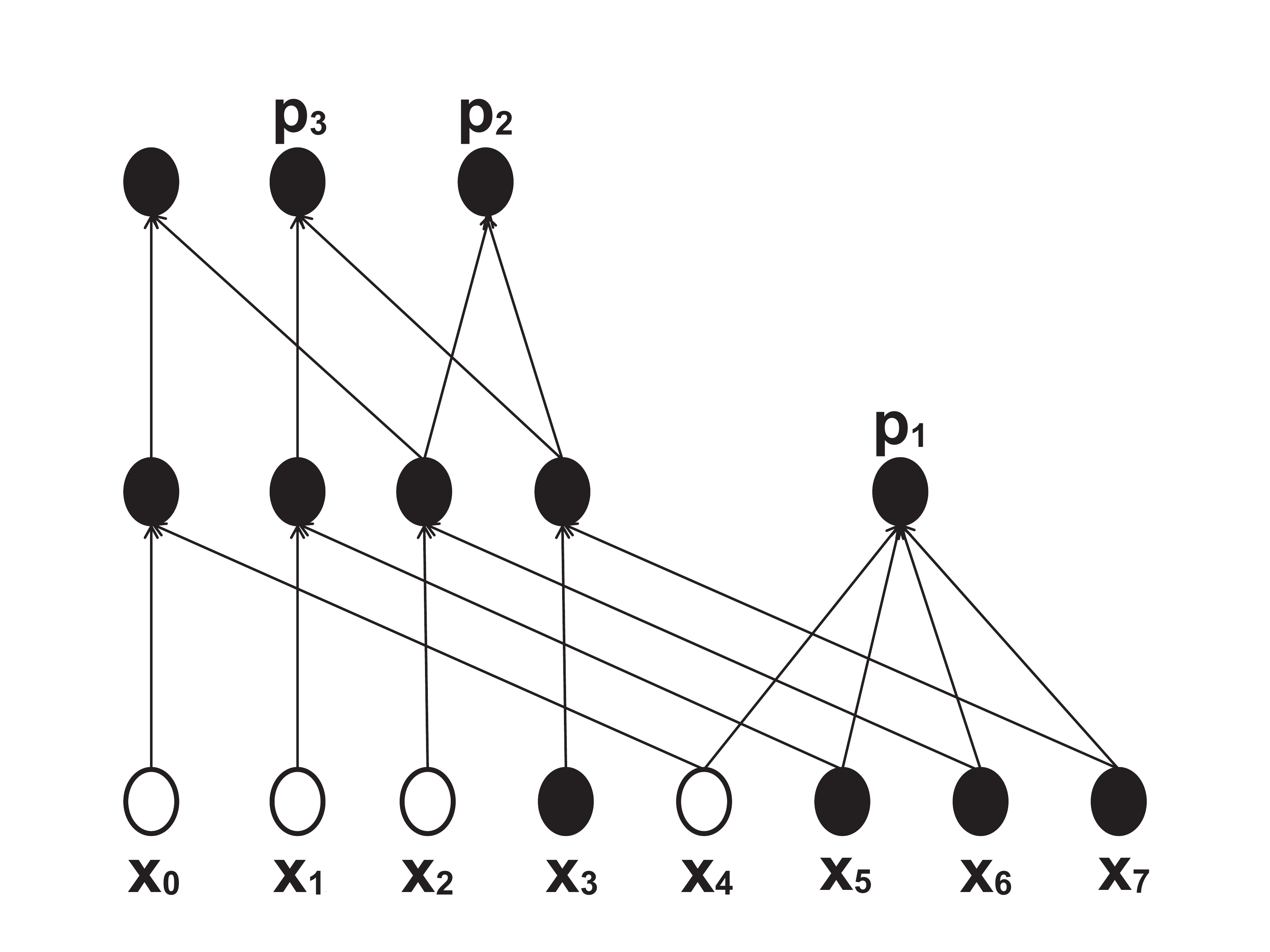}
\caption{\label{fig:fig1} Circuit of Algorithm \ref{alg:1} with $k=3$}
\end{figure}

\subsection{Extended Hamming Codes}
Similarly with Hamming codes, the encoding process of $[2^k, 2^k-k-1]$ extended Hamming codes can be described as follows. First, we construct the $\mathbf{x}$ from $2^{k}-k-1$ message bits and calculate $\mathbf{p}$ from~\eqref{equ:7_}, similarly. Then, we calculate the extra parity bit $p_{k+1}$ via
\begin{equation}\label{eq:PI}
p_{k+1}=\sum_{i=0}^{2^k-1}{x_i}+\sum_{i=1}^{k}{p_i}.
\end{equation}
Then, let $x_{0}=p_{k+1}$ and $x_{2^{i}}=p_{k-i}$ with $i=0,1,\dots,k-1$. Thus from~\eqref{eq:HH} and~\eqref{eq:PI}, it can be verified that
\begin{equation}
\mathbf{H}_k' \mathbf{x}=\begin{bmatrix}
\mathbf{0}_{k}^T &  \mathbf{H}_k\\
1&\mathbf{1}_{2^k-1}\\
\end{bmatrix} \mathbf{x}=0,
\end{equation}
which means $\mathbf{x}$ is the corresponding codeword.

Next we present an approach that incorporates the calculation of $p_{k+1}$ into the recursive formula to reduce the complexity. Let $\mathbf{x}^{i}=[x_{0}^{i}\dots x_{2^{k-i}-1}^{i}]^T, \forall i \in [0, k]$, denote a $2^{k-i}$-element vector and 
\begin{equation}\label{eq:TII}
t_{i}= x_{2^{k-i-1}+1}^{i}+\dots+x_{2^{k-i}}^{i},
\end{equation}
which is the summation of the last $2^{k-i-1}-1$ elements of $\mathbf{x}^{i}$. Let $\mathbf{x}^{0}=\mathbf{x}$. Accordingly, $\mathbf{x}_{i}$ can be obtained from $\mathbf{x}_{i-1}$ via
\begin{equation}\label{eq:49}
x_{j}^{i}=
\begin{cases}
x_{0}^{i-1}+t_{i-1} & j = 0;\\
x_j^{i-1}+x_{2^{k-i}+j}^{i-1} & \forall j\in [2^{k-i}-1].
\end{cases}
\end{equation}
Then, we can obtain parities via
\begin{equation}\label{eq:pi}
p_{i}=
\begin{cases}
x_{2^{k-i}}^{i-1}+t_{i-1} & i \in [0,k-1];\\
x_{1}^{k-1} & i=k;\\
x_{0}^{k-1} & i=k+1.\\
\end{cases}
\end{equation}
For example, Figure~\ref{fig:fig2} presents the information flow graph of the calculation of all $k+1$ parities with $k=3$.

Then, we will show that $p_{1},\dots,p_{k}$ satisfies~\eqref{equ:7_} and $p_{k+1}$ satisfies~\eqref{eq:PI}. First, from~\eqref{eq:TII} and~\eqref{eq:pi}, we have $p_{i}=\sum_{i=2^{k-i}}^{2^{k-i+1}-1}{x_{i}^{i-1}}$, with $i \in [k]$. Thus, $p_{i}$, with $i \in [k]$, is the summation of the last half bits of $\mathbf{x}^{i-1}$, which is the same as the recursive calculation of~\eqref{equ:7_}. 

Second, from~\eqref{eq:pi}, we have 
\begin{equation}\label{eq:pk1}
p_{k+1}=p_{k+1}+p_{k}+p_{k}=x_{0}^{k-1}+p_{k}+x_{1}^{k-2}+x_{3}^{k-2}.
\end{equation}
Then, from~\eqref{eq:49} and~\eqref{eq:pi}, we have
\begin{equation}\label{eq:x0i}
x_{0}^{i}=x_{0}^{i-1}+t_{i-1}=x_{0}^{i-1}+p_{i}+x_{k-i}^{i-1}.
\end{equation}
From~\eqref{eq:x0i},~\eqref{eq:pk1} becomes
\begin{equation}\label{eq:pk1'}
\begin{aligned}
p_{k+1}&=x_{0}^{k-2}+x_{2}^{k-2}+p_{k-1}+x_{1}^{k-2}+x_{3}^{k-2}+p_{k}.
\end{aligned}
\end{equation} 
Further, from~\eqref{eq:49} and~\eqref{eq:x0i}, we have
\begin{equation}\label{eq:sx}
\sum_{j\in [0,2^{k-i}-1]}{x_{j}^{i}}=\sum_{j\in [0,2^{k-i+1}-1]}{x_{j}^{i-1}}+p_{j}
\end{equation}
Then, from~\eqref{eq:pk1'} and~\eqref{eq:sx}, we can derive that $p_{k+1}=\sum_{i=0}^{2^k-1}{x_i^{0}}+\sum_{i=1}^{k}{p_i}=\sum_{i=0}^{2^k-1}{x_i}+\sum_{i=1}^{k}{p_i}$ which satisfies~\eqref{eq:PI}. Thus, we can obtain the correct codeword via this approach.

Algorithm~\ref{alg:2} depicts the explicit steps. In this algorithm, Lines 1-3 handle the basis case $k=1$, namely $x_{0}^{k-1}$ and $x_{1}^{k-1}$ for $p_{k+1}$ and $p_{k}$ respectively. Lines 4-5 calculate $S$ according to the adjustment. Lines 6-9 calculate $x_{i}'$ referred to in \eqref{eq:49}. Then Line 10 calls the procedure recursively on $\mathbf{x}'$. Line 11 presents the return value, which is parity vector. Notably, when Algorithm~\ref{alg:2} is used in the encoding of extended Hamming codes, the input $\mathbf{x}$ shall satisfy the constraint that $x_0=0$ and $x_{2^i}=0,\forall i\in [0, k-1]$. Figure \ref{fig:fig2} presents the circuit of the proposed Algorithm \ref{alg:2} at $k=3$.

\begin{figure}
\center
   \includegraphics[width=0.7\columnwidth]{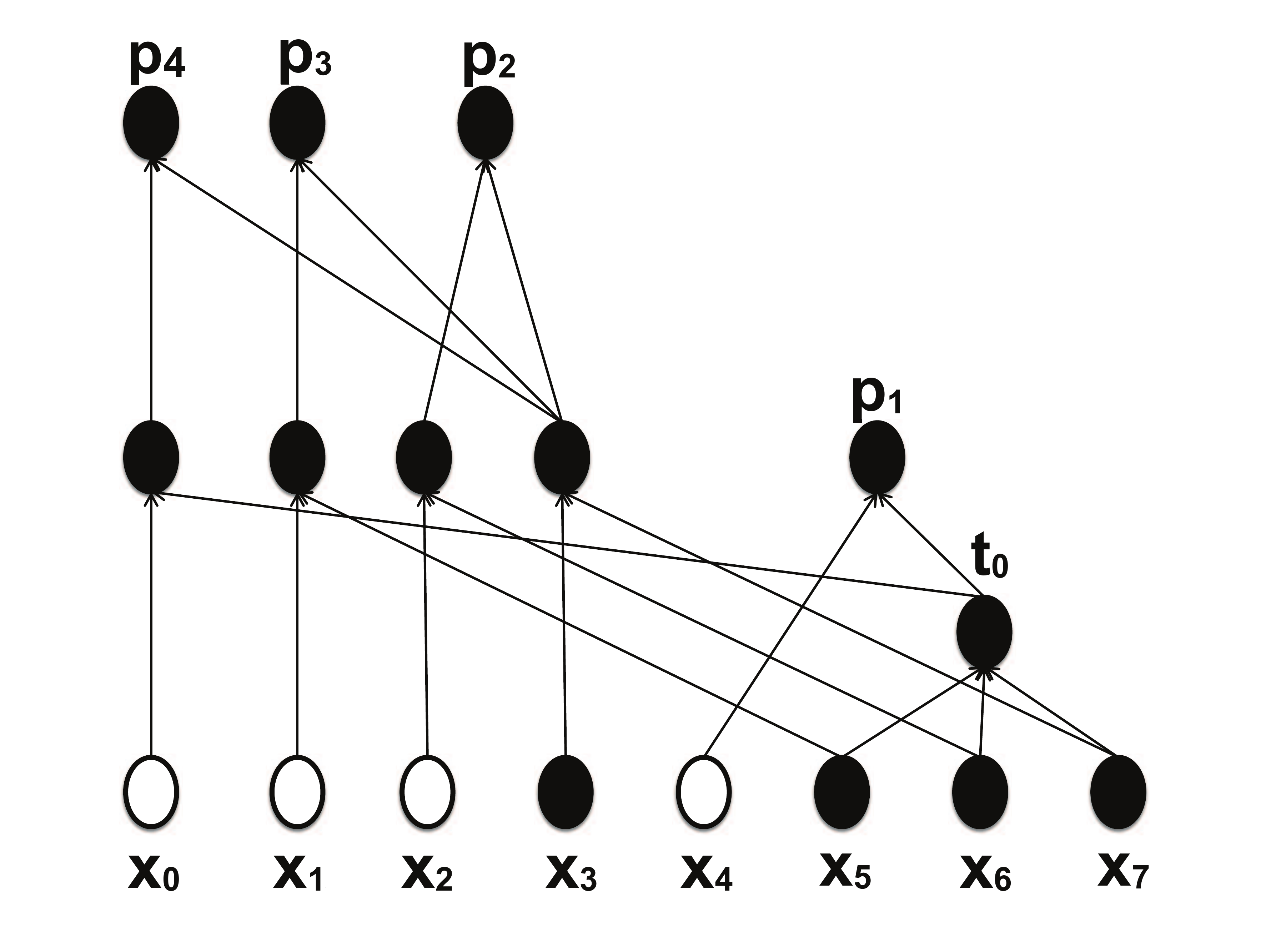}
\caption{\label{fig:fig2} Circuit for Algorithm \ref{alg:2} with $k=3$}
\end{figure}

\begin{algorithm}[t]
\caption{$\mathrm{P}_{4}(\mathbf{x}, k)$: Encoding of extended Hamming codes (Recursive version)}\label{alg:2}
\KwIn{$\mathbf{x}=[x_0\quad  x_1\quad \dots \quad x_{2^{k}-1}]$, and $k$}
\KwOut{$\mathbf{p}_k'$}
\If{$k=1$}{
\Return $[x_1 \quad x_0]$
}
$\alpha \leftarrow \sum_{i=1}^{2^{r-1}-1}x_{2^{r-1}+i}$\\
$S \leftarrow x_{2^{r-1}}+\alpha$\\
$x_{0}'=x_{0}+\alpha$\\
\For{$i= 1$ \KwTo $(2^{k-1}-1)$}{
$x_i'\leftarrow x_i+ x_{2^{k-1}+i}$
}
Call $\mathbf{p}'\leftarrow \mathrm{P}_{4}(\mathbf{x}', k-1)$, where $\mathbf{x}'=[x_0'\quad x_1'\quad \dots\quad x_{2^{k-1}-1}']$\\
\Return $[S \quad \mathbf{p}']$
\end{algorithm}

\subsection{Circuit sizes and parallel depths}\label{sec:com}
The number of XORs for $[2^{k}-1, 2^{k}-k-1]$ Hamming codes (Algorithm~\ref{alg:1}) and for $[2^{k}, 2^{k}-k-1]$ extended Hamming codes (Algorithm~\ref{alg:2}) are denoted as $A_3(k)$ and $A_{4}(k)$, respectively. 
In Algorithm \ref{alg:1}, Lines 1-3 give the base case that
\begin{equation}\label{eq:a2s}
A_3(1)=0.
\end{equation}
Line 4 requires $2^{k-1}-1$ additions, and Lines 5-7 require $2^{k-1}$ XORs. Line 9 calls the procedure recursively. In summary, the recurrence relations are written as
\begin{equation}\label{eq:a2r}
\begin{aligned}
A_3(k)&=A_3(k-1)+2^{k}-1.
\end{aligned}
\end{equation}
The solution is given by
\begin{equation}\label{eq:a2rr}
A_3(k)=2^{k+1}-k-3.
\end{equation}

Then, for Algorithm \ref{alg:2}, $A_4(k)$ has the same recursive formula and the same recursion depth. Thus $A_4(k)=A_3(k)$
\begin{equation}\label{eq:31}
A_4(k)=2^{k+1}-k-3.
\end{equation}

%
Moreover, there are some redundant XOR operations in Algorithms \ref{alg:1} and \ref{alg:2} that can be eliminated. First, the zero-XORs $a+0=a$ can be eliminated in the algorithms. In the input $\mathbf{X}=[x_{0} \dots x_{2^k-1}]$, we have $x_{i}=0$ for $i=0$ and $i$ a power of two. This gives $k+1$ zero-XORs in the second layer of the circuit. Second, for Algorithm \ref{alg:1}, as shown in Fig. \ref{fig:fig1}, the left node of the top layer is redundant, and this leads that the left node in the prior layer is also redundant. By inductions, it can be shown that the left node in every layer can be removed. In summary, a total of $k-2$ XORs can be eliminated from the third layer to the top layer~(the $k$-th layer). Thus, $A_3(k)$ and $A_4(k)$ can be reduced to 
\begin{equation}\label{reduced-A_3}
2^{k+1}-3k-2
\end{equation}
and 
\begin{equation}\label{reduced-A_4}
2^{k+1}-2k-4,
\end{equation}
respectively. By \eqref{reduced-A_3} and  \eqref{reduced-A_3} given in Section \ref{sec:3}, the circuits sizes analysis shows that the $[2^{k}-1, 2^{k}-k-1]$ Hamming code exactly requires $2^{k+1}-3k-2$ XORs by Algorithm~\ref{alg:1}, and the $[2^{k}, 2^{k}-k-1]$ extended Hamming code exactly requires $2^{k+1}-2k-4$ XORs by Algorithm \ref{alg:2}. This concludes that the lower bounds of circuit sizes shown in Theorem \ref{th:2} and Theorem \ref{th:3} are achievable. As shown before, these proposed encoding algorithms achieve the minimal circuit sizes. Note that the parallel depth of the circuit in Figures \ref{fig:fig1} and \ref{fig:fig2} are both $2$. Thus, it is easy to verify that the parallel depth of Algorithms \ref{alg:1} and \ref{alg:2} are both $k-1$.

\subsection{Shortened Hamming codes}\label{sec:5D}
In this section, we generalize the shortened Hamming codes in Section \ref{sec:2E} to the encoding algorithm of the $[2^{k-1}+k, 2^{k-1}]$ shortened Hamming codes $\mathcal{C}$. Further, the circuit size and the parallel depth are analyzed.

For a codeword $\mathbf{c}=[x_{0} \dots x_{2^{k-1}-1} \ p_{0} \dots p_{k-1}]$ of $\mathcal{C}$, its parity check matrix is given by
\begin{equation}\label{eq:64}
\begin{bmatrix}
T^{k}(2^{k-1}-1) & T^{k}(2^{k-1}+1) & T^{k}(2^{k-1}+2) & \dots & T^{k}(2^{k}-1) & T^{k}(2^{0}) & T^{k}(2^{1}) & \dots & T^{k}(2^{k-1})
\end{bmatrix}.
\end{equation}
One can see that \eqref{eq:64} is a submatrix of $\mathbf{H}_{k}$ defined in \eqref{equ:2}. This shows that $\mathcal{C}$ is a class of $[2^{k-1}+k, 2^{k-1}]$ shorten Hamming codes. From Table \ref{tab:2}, it is easy to verify that the shorten Hamming codes \cite{Warren:2002:HD:515297} is $\mathcal{C}$ with $6$ parity bits and $32$ message bits.

Then we discuss the encoding algorithm of $\mathcal{C}$. The approach \cite{Warren:2002:HD:515297} is presented in the software implementation. To show the encoding complexity in terms of circuit size an parallel depth, we describe the algorithm with the encoding circuit of $k=8$ message bits in Figure \ref{fig:fig3}. The figure shows the procedure of obtaining the set of outputs $\{p_{i}'\}_{i=0}^3$, and the parities are given by $p_{i}=p_{i}'+x_{0}$.
Let $A_{5}(k)$ denote the circuit size of the algorithm \cite{Warren:2002:HD:515297} with $2^{k}$ message bits. As shown in Figure \ref{fig:fig3}, calculating $p_{k}'$ requires $\sum_{i=0}^{k-1}{2^{i}}=2^{k}-1$ XORs to build the complete binary tree. Further, calculating $\{p_{i}'\}_{i=0}^{k-1}$ requires a total of $\sum_{i=1}^{k-1}{(2^{i}-1)}=2^{k}-k-1$ XORs, and calculating $\{p_{i}=p_{i}'+x_{0}\}_{i=0}^k$ requires $k+1$ XORs. In summary, $A_{5}(k)=2^{k+1}-1$. Moreover, from Figure \ref{fig:fig3} and $p_{i}=p_{i}'+x_{0}$, the parallel depth of this algorithm with message length $2^{k}$ is $k+1$.


The above shows that the shortened Hamming codes \cite{Warren:2002:HD:515297} requires around $2$ XORs per message bit. However, we do not aware any literature to prove the computational lower bound of the shortened Hamming code. This causes that we do not know the gap between the algorithm \cite{Warren:2002:HD:515297} and the real lower bound.

\begin{figure}
	\center
	\includegraphics[width=0.7\columnwidth]{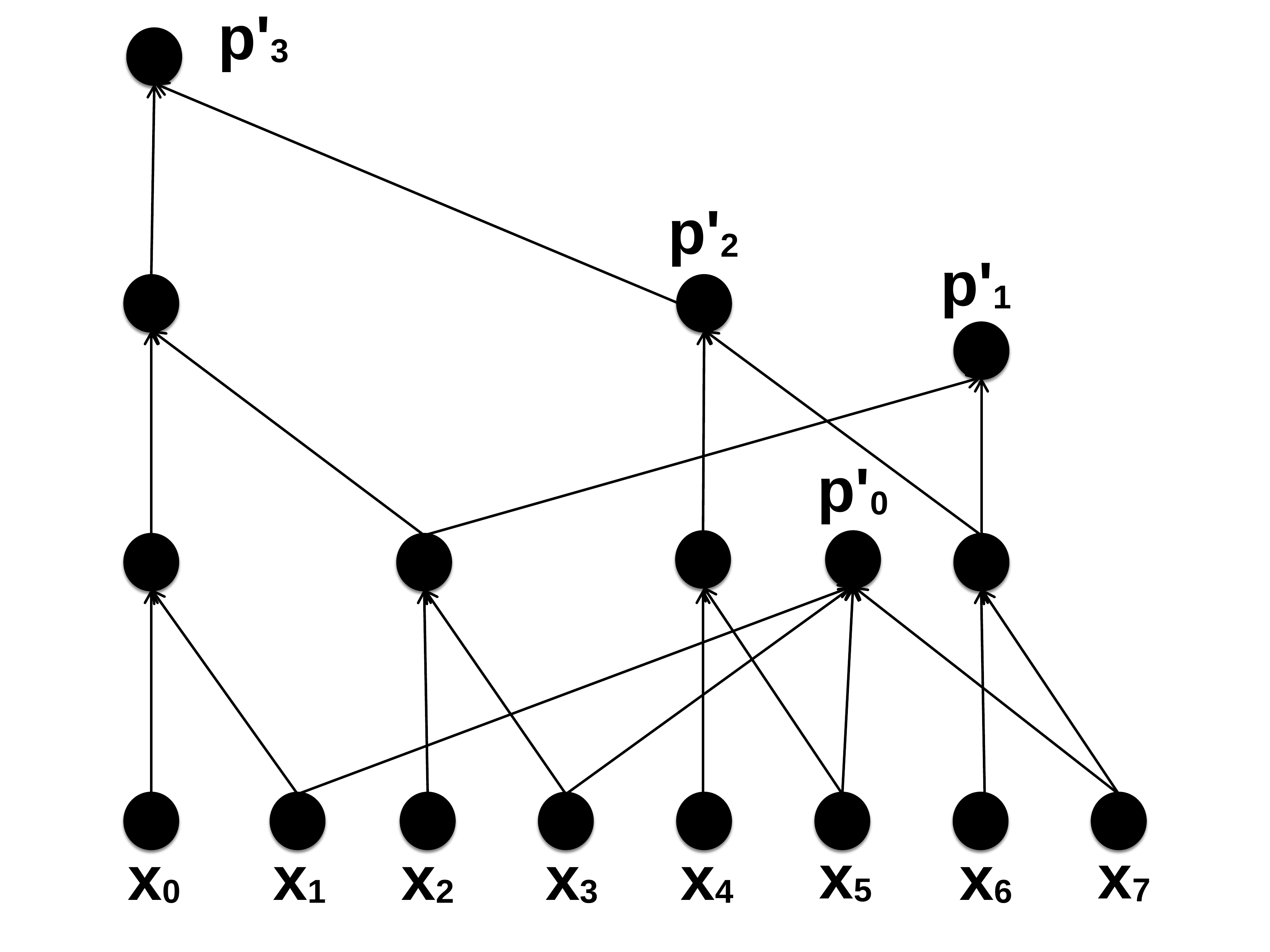}
	\caption{\label{fig:fig3} Part of the circuit of the algorithm \cite{Warren:2002:HD:515297} with $8$ message bits}
\end{figure}

%
%

\section{Conclusions}\label{sec:6}
In this paper, the encoding circuit sizes of (extended)~Hamming codes and (punctured)~Hadamard codes are investigated. The exact lower bounds on the encoding circuit sizes of these codes are presented. Further, we also proposed encoding algorithms for (punctured)~Hadamard codes and (extended)~Hamming codes that reach the exact lower bounds of circuit size, to show that these lower bounds are achievable. To our knowledge, we are the first to show a exact and tight lower bound on encoding computational complexity of non-trivial linear block codes. A possible future work is to find the exact lower bounds on the encoding circuit size of other more complex linear error correcting codes.


\bibliographystyle{spmpsci}      
\bibliography{Hamming_codes}   


\end{document}